\providecommand{\tabularnewline}{\\}
\providecommand{\algorithmname}{Algorithm}
\def\plist@algorithm{\algorithmname\space}
\newcommand{\noun}[1]{\textsc{#1}}
\theoremstyle{definition}
\newtheorem{defn}{\protect\definitionname}
\theoremstyle{remark}
\newtheorem{rem}{\protect\remarkname}
\theoremstyle{plain}
\newtheorem{thm}{\protect\theoremname}
\theoremstyle{plain}
\newtheorem{lem}{\protect\lemmaname}
\providecommand{\definitionname}{Definition}
\providecommand{\lemmaname}{Lemma}
\providecommand{\remarkname}{Remark}
\providecommand{\theoremname}{Theorem}
\newcommand{\chang}{\textcolor{black}}
\begin{document}

  \author*[1]{Thee Chanyaswad}

  \author[2]{Changchang Liu}

  \author[3]{Prateek Mittal}


  \affil[1]{Princeton University, E-mail: tc7@princeton.edu (Currently at KBTG Machine Learning Team, Thailand, E-mail: theerachai.c@kbtg.tech)}

  \affil[2]{Princeton University, E-mail: cl12@princeton.edu  (Currently at IBM T. J. Watson Research Center, E-mail: changchang.liu33@ibm.com)}

  \affil[3]{Princeton University, E-mail: pmittal@princeton.edu}


  \title{\huge RON-Gauss: Enhancing Utility in Non-Interactive
Private Data Release}

  \runningtitle{RON-Gauss: Enhancing Utility in Non-Interactive Private Data Release}

  \begin{abstract}
{
A key challenge facing the design of differential privacy in the non-interactive
setting is to maintain the utility of the released data. To overcome
this challenge, we utilize the \emph{Diaconis-Freedman-Meckes (DFM)
effect}, which states that most projections of high-dimensional data
are nearly Gaussian. Hence, we propose the \emph{RON-Gauss} model
that leverages the novel combination of dimensionality reduction via
random orthonormal (RON) projection and the Gaussian generative model
for synthesizing differentially-private data. We analyze how RON-Gauss
benefits from the DFM effect, and present multiple algorithms for
a range of machine learning applications, including both unsupervised
and supervised learning. Furthermore, we rigorously prove that (a)
our algorithms satisfy the strong $\epsilon$-differential privacy
guarantee, and (b) RON projection can lower the level of perturbation
required for differential privacy. Finally, we illustrate the effectiveness
of RON-Gauss under three common machine learning applications \textendash{}
clustering, classification, and regression \textendash{} on three
large real-world datasets. Our empirical results show that (a) RON-Gauss
outperforms previous approaches by up to an order of magnitude, and (b)
loss in utility compared to the non-private real data is small. Thus,
RON-Gauss can serve as a key enabler for real-world deployment of
privacy-preserving data release.
}
\end{abstract}
  \keywords{differential privacy, non-interactive private data release, random orthonormal projection, Gaussian generative model,  Diaconis-Freedman-Meckes effect}

  \journalname{Proceedings on Privacy Enhancing Technologies}
\DOI{Editor to enter DOI}
  \startpage{1}
  \received{..}
  \revised{..}
  \accepted{..}

  \journalyear{..}
  \journalvolume{..}
  \journalissue{..}

\maketitle

\section{Introduction}

In an era of big data and machine learning, our digital society is
generating a considerable amount of personal data at every moment.
These data can be sensitive, and as a result, significant privacy
concerns arise. Even with the use of anonymization mechanisms, privacy
leakage can still occur, as exemplified by Narayanan et al. \cite{RefWorks:181},
Calandrino et al. \cite{RefWorks:182}, Barbaro and Zeller \cite{RefWorks:183},
Haeberlen et al. \cite{RefWorks:260}, and Backes et al. \cite{RefWorks:262}.
These privacy leaks have motivated the design of formal privacy analysis.
To this end, \emph{differential privacy (DP)} has become the gold
standard for a rigorous privacy guarantee \cite{RefWorks:151,RefWorks:185,RefWorks:186,RefWorks:194,RefWorks:195}.
Many mechanisms have been proposed to comply with differential privacy \cite{RefWorks:193,RefWorks:194,RefWorks:188,RefWorks:185,RefWorks:195,RefWorks:192,RefWorks:220,RefWorks:221,RefWorks:222,RefWorks:224,RefWorks:246},
and various implementations of differentially-private systems have
been presented in the literature \cite{RefWorks:254,RefWorks:255,RefWorks:256,RefWorks:257,RefWorks:258,RefWorks:261,RefWorks:265,RefWorks:379,blocki2016differentially,RefWorks:449}.

There are two settings under differential privacy \textendash{} interactive
and non-interactive \cite{RefWorks:151}. Among the two, the non-interactive
setting has traditionally been more challenging to implement due to
the fact that the perturbation required is often too high for the
published data to be truly useful \cite{RefWorks:381,RefWorks:336,RefWorks:527,RefWorks:530}.
However, this setting is still attractive, as there are incentives
for the data collectors to release the data in order to seek outside
expertise, e.g. the Netflix prize \cite{RefWorks:377}, and OpenSNP
\cite{RefWorks:378}. Concurrently, there are incentives for researchers
to obtain the data in their entirety, as existing software frameworks
for data analytics could be directly used \cite{RefWorks:336,RefWorks:372}.
Particularly, in the current era when machine learning has become
the ubiquitous tool in data analysis, non-interactive data release
would allow virtually instant compatibility with existing 
learning algorithms. For these reasons, we aim to design a
non-interactive differentially-private (DP) data release system.

In this work, we draw inspiration from the \emph{Diaconis-Freedman-Meckes
(DFM) effect} \cite{RefWorks:439}, which shows that, under suitable
conditions, most projections of high-dimensional data are nearly
\emph{Gaussian}. This effect suggests that, although finding an accurate
model for a high-dimensional dataset is generally hard \cite{RefWorks:448}\cite[chapter 7]{RefWorks:376},
its projection onto a low-dimensional space may be modeled well by
the Gaussian model. With respect to the application of non-interactive
DP data release, this is particularly important because, in DP, simple
statistics can generally be privately learned \emph{accurately} \cite{RefWorks:477,RefWorks:478},
while privately learning the database accurately is generally much
more difficult \cite{RefWorks:527,RefWorks:530,RefWorks:479}.

To apply the DFM effect to the non-interactive DP data release, we
combine two previously non-intersecting methods \textendash{} \emph{dimensionality
reduction (DR)} \cite{RefWorks:313,RefWorks:314,RefWorks:337,RefWorks:338,RefWorks:339,RefWorks:224}
and \emph{parametric generative model} \cite{RefWorks:238,RefWorks:187,RefWorks:333,RefWorks:334,RefWorks:335,RefWorks:336,RefWorks:384}.
Although each method has independently been explored for the application
of non-interactive private data release, without properly combining
the two, the DFM effect has not been fully utilized. As we show in
this work, \emph{combining the two} to utilize the DFM effect can
lead to significant gain in the utility of the released data. Specifically,
we closely investigate the DFM theorem by Meckes \cite{RefWorks:439}
and propose the \emph{RON-Gauss} model
for non-interactive private data release, which combines two techniques
\textendash{} \emph{random orthonormal (RON) projection} and the \emph{Gaussian
generative model}. The first component is the DR technique used for
two purposes: reducing the sensitivity (similar to previous works \cite{RefWorks:313,RefWorks:314,RefWorks:337,RefWorks:338,RefWorks:339,RefWorks:224}) and triggering
the DFM effect (which is a first, to the
best of our knowledge). The second component is the parametric model used
to capture the Gaussian nature of the projection and to allow an accurate DP data modeling.

We present three algorithms for RON-Gauss
that can be applied to a wide range of machine learning applications,
including both \emph{unsupervised} and \emph{supervised} learning.
The supervised learning application, in particular, provides an additional
challenge on the conservation of the training label through the sanitization
process. Unlike many previous works, RON-Gauss ensures the integrity
of the training label of the sanitized data. We rigorously prove
that all of our three algorithms preserve the strong $\epsilon$-differential
privacy guarantee. Moreover, to show the general applicability of
our idea, we extend the framework to employ the \emph{Gaussian Mixture
Model (GMM)} \cite{RefWorks:225,RefWorks:51}.

Finally, we evaluate our approach on three large real-world datasets
under three common machine learning applications under the non-interactive setting of DP \textendash{} clustering,
classification, and regression. The non-interactive setting is attractive for these applications since it allows multiple data-analytic algorithms to be run on the released DP-data without requiring additional privacy budget like the interactive setting. We demonstrate that our method can
significantly improve the utility performance by up to an order of magnitude
for a fixed privacy budget, when compared to four prior methods. More
importantly, our method has small loss in utility when compared to
the performance of the non-private real data.

We summarize our contribution as follows. 
\begin{itemize}
\item We exploit the DFM effect for utility enhancement of differential privacy in the non-interactive setting.
\item We propose an approach consisting of random orthonormal projection
and the Gaussian generative model (\emph{RON-Gauss}) for non-interactive
DP data release. We also extend this model to the Gaussian Mixture
Model (GMM).
\item We present three algorithms to implement RON-Gauss \chang{that are suitable for} both
the \emph{unsupervised} and \emph{supervised} machine learning tasks.
\item We rigorously prove that our RON-Gauss algorithms satisfy the strong
$\epsilon$-differential privacy.
\item We evaluate our method on three real-world datasets on three machine learning applications under the non-interactive DP setting \textendash{} \emph{clustering}, \emph{classification},
and \emph{regression}. The experimental results show that, when compared
to previous methods, our method can considerably enhance the utility
performance by up to an order of magnitude for a fixed privacy budget. Finally,
compared to the non-private baseline of using real data, RON-Gauss
incurs only a small loss in utility across all three machine learning
tasks.
\end{itemize}
\chang{Roadmap: We discuss prior works in Section \ref{sec:Prior-Works}, and present the background components of our approach, including details of the DFM effect, in Section \ref{sec:preliminaries}. Then, we present the proposed RON-Gauss model \textendash{} along with its theoretical analysis, algorithms for both supervised and unsupervised learning, and the privacy proofs \textendash{} in Section \ref{sec:RON-Gauss}. Finally, we present  experimental results showing the strength of RON-Gauss in Section~\ref{sec:Experiments}, and 
the discussion in Section \ref{sec:discussion}.
}

\vspace{-1em}
\section{Prior Works}

\label{sec:Prior-Works}

Our work focuses on non-interactive differentially-private (DP)
data release. Since our method involves dimensionality reduction and a generative model, we discuss the relevant works under these
frameworks.

\vspace{-1em}
\subsection{Generative Models for Differential Privacy}

\label{subsec:generative_models} The use of generative models for non-interactive DP data release can be classified
into two groups according to Bowen and Liu \cite{RefWorks:370}: \emph{non-parametric
generative models}, and \emph{parametric generative models}.

\vspace{-1em}
\subsubsection{Non-Parametric Generative Models}

Primarily, these models utilize the differential privacy guarantee
of the Exponential mechanism \cite{RefWorks:192}, which defines a
distribution to synthesize the data based on the input database and
the pre-defined quality function. Various methods \textendash{} both
application-specific and application-independent \textendash{} have
been proposed \cite{RefWorks:174,RefWorks:192,RefWorks:187,RefWorks:179,RefWorks:237,RefWorks:221,RefWorks:382,RefWorks:383,RefWorks:450,RefWorks:479,RefWorks:480,RefWorks:481,RefWorks:512,RefWorks:449}.
Our approach contrasts these works in two ways. First, we consider
a parametric generative model, and, second, we augment our model with
dimensionality reduction to trigger the DFM effect. We will compare our method
to this class of model by implementing the non-parametric generative
model based on Algorithm 1 in \cite{RefWorks:174}.

\vspace{-1em}
\subsubsection{Parametric Generative Models}

Our method of using the \emph{Gaussian generative model}, as well as the \emph{Gaussian
Mixture Model}, falls into this category. We aim at building a system
that can be applied to various applications and data-types,
i.e. \emph{application-independent}. However, many previous
works on non-interactive DP data release are application-specific
or limited by the data-types they are compatible with. Thus, we discuss
these two types separately.

\vspace{-1em}
\paragraph{Application-Specific}

These models are designed for specific applications or data-types.
For example, the works by Sala et al. \cite{RefWorks:238} and by
Proserpio et al. \cite{RefWorks:187} are for graph analysis, the
system by Ororbia et al. \cite{RefWorks:334} is for plaintext statistics,
the analysis by Machanavajjhala et al. \cite{RefWorks:333} is for
commuting pattern analysis, the Binomial-Beta model by McClure and
Reiter \cite{RefWorks:371} and Bayesian-network by Zhang et al. \cite{RefWorks:335}
are for binary data, and the LDA model by Jiang et al. \cite{RefWorks:339}
is for binary classification. In contrast, in this work, we aim at designing an application-independent generative model.

\vspace{-1em}
\paragraph{Application-Independent}

These generative models are less common, possibly due to the fact
that releasing data for general analytics often requires a high level
of perturbation that impacts data utility. Bindschaedler et al. \cite{RefWorks:336}
design a system for \emph{plausible deniability}, which can be extended
to $(\epsilon,\delta)$-differential privacy. Acs et al. \cite{RefWorks:476}
design a system based on two steps \textendash{} kernel K-means clustering
and generative neural networks \textendash{} to similarly provide
$(\epsilon,\delta)$-differential privacy. In contrast, our work aims
at providing the strictly stronger $\epsilon$-differential privacy.
Another previous method is MODIPS by Liu \cite{RefWorks:372}, which
applies statistical models based on the concept of sufficient statistics
to capture the distribution of the data, and then synthesizes the
data from the differentially-private models. This general idea is, in fact, closely related to the Gaussian generative
model employed in this work. However, the important distinction is that
MODIPS is not accompanied by dimensionality reduction \textendash{} a step which
will be shown to enhance the utility of released
data via the DFM effect. For comparison, we implement MODIPS in our experiments and show the improvement achievable by RON-Gauss.

\vspace{-0.5em}
\subsection{Dimensionality Reduction and Differential Privacy}

\label{subsed:dim_for_improve_utility} 
Traditionally, data partition
and aggregation \cite{RefWorks:241,RefWorks:242,RefWorks:243,RefWorks:244,RefWorks:245,RefWorks:246,RefWorks:247,RefWorks:248,RefWorks:193,RefWorks:253,RefWorks:187,RefWorks:221,RefWorks:236,RefWorks:237}
have been applied to enhance data utility in differential
privacy. In contrast, our work utilizes the DFM effect for the utility enhancement, of which an important component is \emph{dimensionality reduction (DR)} using the RON projection.  We present previous works pertaining to the use of DR in DP here. However, although previous works have explored the use of DR to directly provide DP or to reduce the sensitivity of the query, our work, in contrast, uses DR primarily to trigger the DFM effect for enhancing data utility in the non-interactive setting.

\vspace{-1em}
\subsubsection{Random Projection}

For suitable query functions, random projection
has been shown to preserve differential privacy \cite{RefWorks:313,RefWorks:398,RefWorks:399}. Alternatively, random projection has also been used to enhance the utility of differential privacy. Multiple types
of random projections have been used with the identity query for non-interactive
DP data release for both purposes. For example, Blocki et al. \cite{RefWorks:313},
Kenthapadi et al. \cite{RefWorks:314}, Zhou et al. \cite{RefWorks:338},
and Xu et al. \cite{RefWorks:400} use a random matrix whose entries
are i.i.d. Gaussian, whereas Li et al. \cite{RefWorks:337} use i.i.d. Bernoulli
entries. However, there are three main
contrasts to our work. (1) While Blocki et al. \cite{RefWorks:313}
use random projection to preserve differential privacy, we use random projection\emph{
to enhance utility} via the DFM effect. (2) Instead of i.i.d. Gaussian or Bernoulli
entries, we use \emph{random
orthonormal (RON) projection} to ensure the DFM effect as proved by
Meckes \cite{RefWorks:439}. (3) \emph{As opposed to our approach,
none of the previous random-projection methods couples DR with a generative
model.} We will experimentally compare our work with the method by Li et al. \cite{RefWorks:337}, and show that, by exploiting
the Gaussian phenomenon via the DFM effect, we achieve significant
utility gain.

\vspace{-1em}
\subsubsection{Other Dimensionality Reduction Methods}

Other DR methods have also been used with the identity query to enhance
data utility including PCA \cite{RefWorks:339}, wavelet transform
\cite{RefWorks:224}, and lossy Fourier transform \cite{RefWorks:219}.
In contrast, our DR is coupled with a generative model, rather than
used with the identity query. We experimentally compare our
work with the PCA method by Jiang et al. \cite{RefWorks:339} and
show that our use of the generative model yields significant improvement.

\section{Preliminaries}
\label{sec:preliminaries}
In this section, we discuss important background concepts related
to our work.

\vspace{-1em}
\subsection{Database Notation}

\label{subsec:notation} 

We refer to the database as the \emph{dataset}, which contains $n$
records (samples), each with $m$ real-valued attributes (features)
\textendash{} although, our approach is also compatible with categorical
features since they can be converted to real values with encoding
techniques \cite{RefWorks:226}. With this setup, the dataset can
be represented by the data matrix $\mathbf{X}\in\mathbb{R}^{m\times n}$,
whose column vectors $\mathbf{x}_{j}$ are the samples, with $x_{j}(i)$
refers to the $i^{th}$ feature. Finally, random variables are denoted
by a regular capital letter, e.g. $Z$, which may refer to a random
scalar, vector, or matrix. The reference will be clear from the context.

\vspace{-1em}
\subsection{Differential Privacy (DP)}

Differential privacy (DP) protects against the inference of the participation
of a sample in the dataset as follows.
\begin{defn}[$\epsilon$-DP]
\label{def:differential_privacy}A mechanism $\mathcal{A}$ on a query
function $f(\cdot)$ preserves $\epsilon$- differential privacy if
for all neighboring pairs $\mathbf{X},\mathbf{X}'\in\mathbb{R}^{m\times n}$
which differ in a single record and for all possible measurable outputs
$\mathbf{S}\subseteq\mathcal{R}$, 
\[
\frac{\Pr[\mathcal{A}(f(\mathbf{X}))\in\mathbf{S}]}{\Pr[\mathcal{A}(f(\mathbf{X}'))\in\mathbf{S}]}\leq\exp(\epsilon).
\]
\end{defn}
\begin{rem}
\label{rem:eps-delt-DP}There is also the $(\epsilon,\delta)$-differential
privacy ($(\epsilon,\delta)$-DP) \cite{RefWorks:186,RefWorks:259}, which is a relaxation of
this definition. However, this work focuses primarily on
the stronger $\epsilon$-DP.
\end{rem}
Our approach employs the Laplace mechanism, which uses the
notion of $L_{1}$-sensitivity. \chang{For a general query function whose output can be a $p\times q$ matrix, the $L_{1}$-sensitivity is defined as follows.}
\begin{defn}
The $L_{1}$-sensitivity of a query function $f$: $\mathbb{R}^{m\times n}\rightarrow\mathbb{R}^{p\times q}$
for all neighboring datasets $\mathbf{X},\mathbf{X}'\in\mathbb{R}^{m\times n}$
which differ in a single sample is 
\[
S(f)=\sup_{\mathbf{X},\mathbf{X}'}\left\Vert f(\mathbf{X})-f(\mathbf{X}')\right\Vert _{1}.
\]
\end{defn}
\begin{rem}
\label{rem:neighboring_notion}In DP, the notion of neighboring datasets
$\mathbf{X},\mathbf{X}'$ can be considered in two related ways. The
first is the \emph{unbounded} notion when one record is removed or
added. The second is the \emph{bounded} notion when values of one
record vary. The main difference is that the latter assumes the size
of the dataset $n$ is publicly known, while the former assumes it
to be private. However, the two concepts are closely related and a
mechanism that satisfies one can also satisfy the other with a small
cost (cf. \cite{RefWorks:174}). In the following analysis, we adopt
the latter notion for clarity and mathematical simplicity.
\end{rem}
The main tool for DP guarantee in this work is the Laplace mechanism,
which is recited as follows \cite{RefWorks:195,RefWorks:220}.
\begin{thm}
\label{thm:laplace_matrix}For a query function $f$: $\mathbb{R}^{m\times n}\rightarrow\mathbb{R}^{p\times q}$
with the $L_{1}$-sensitivity $S(f)$, the following mechanism preserves
$\epsilon$-differential privacy: 
\[
San(f(\mathbf{X}))=f(\mathbf{X})+Z,
\]
where $Z\in\mathbb{R}^{p\times q}$ with $z_{j}(i)$ drawn i.i.d.
from the Laplace distribution $Lap(S(f)/\epsilon)$. 
\end{thm}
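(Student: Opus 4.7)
The plan is to reduce the matrix case to the standard vector Laplace-mechanism argument, exploiting the fact that the $L_{1}$-norm of a matrix (as defined in the sensitivity) is just the sum of absolute entries, so the matrix shape is immaterial for the analysis. Concretely, I would view the output $\operatorname{San}(f(\mathbf{X}))$ as a random variable in $\mathbb{R}^{p\times q}$ whose density is the product of $pq$ independent Laplace densities with common scale $b=S(f)/\epsilon$, and then compute the density ratio between the mechanism applied to $\mathbf{X}$ and to a neighboring $\mathbf{X}'$.

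First I would write, for any target output $\mathbf{Y}\in\mathbb{R}^{p\times q}$,
\[
h_{\mathbf{X}}(\mathbf{Y})=\prod_{i,j}\frac{1}{2b}\exp\!\Bigl(-\frac{|Y_{ij}-f(\mathbf{X})_{ij}|}{b}\Bigr),
\]
and similarly for $\mathbf{X}'$. Taking the ratio collapses the $1/(2b)$ prefactors and yields $\exp\bigl(\tfrac{1}{b}\sum_{i,j}(|Y_{ij}-f(\mathbf{X}')_{ij}|-|Y_{ij}-f(\mathbf{X})_{ij}|)\bigr)$. Applying the reverse triangle inequality entry-wise bounds each summand by $|f(\mathbf{X})_{ij}-f(\mathbf{X}')_{ij}|$, so the exponent is at most $\tfrac{1}{b}\lVert f(\mathbf{X})-f(\mathbf{X}')\rVert_{1}\leq S(f)/b=\epsilon$. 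This gives a pointwise density bound $h_{\mathbf{X}}(\mathbf{Y})\leq e^{\epsilon}h_{\mathbf{X}'}(\mathbf{Y})$.

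Finally, integrating this pointwise bound over any measurable set $\mathbf{S}\subseteq\mathbb{R}^{p\times q}$ transfers it to probabilities, giving $\Pr[\operatorname{San}(f(\mathbf{X}))\in\mathbf{S}]\leq e^{\epsilon}\Pr[\operatorname{San}(f(\mathbf{X}'))\in\mathbf{S}]$, which is exactly Definition~\ref{def:differential_privacy}. I do not anticipate a real obstacle here: the only subtle point is making sure that the sensitivity defined via $\lVert\cdot\rVert_{1}$ on matrices corresponds to the sum of entry-wise absolute values that appears naturally in the product of Laplace densities, but this is immediate from the definition given just above the theorem, so the proof is essentially a direct lift of the classical vector Laplace-mechanism proof.
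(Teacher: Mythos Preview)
Your proof is correct and is exactly the standard density-ratio argument for the Laplace mechanism. Note, however, that the paper does not actually supply its own proof of this theorem: it is stated as a cited result (``recited'' from \cite{RefWorks:195,RefWorks:220}) and used as a black-box tool, so there is no in-paper proof to compare against. Your write-up would serve perfectly well as a self-contained justification if one were desired.
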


\vspace{-2em}
\subsection{Gaussian Generative Model}\label{subsec:gaussian_model}
Gaussian generative model synthesizes
the data from the Gaussian distribution $\mathcal{N}(\boldsymbol{\mu},\boldsymbol{\Sigma})$,
which is parameterized by the mean $\boldsymbol{\mu}\in\mathbb{R}^{m}$,
and the covariance $\boldsymbol{\Sigma}\in\mathbb{R}^{m\times m}$
\cite{RefWorks:51,RefWorks:475}. Formally, the Gaussian generative
model has the following density function: 
\begin{equation}
f(\mathbf{x})=\frac{1}{\sqrt{(2\pi)^{m}\det(\boldsymbol{\Sigma})}}\exp(-\frac{1}{2}(\mathbf{x}-\boldsymbol{\mu})^{T}\boldsymbol{\Sigma}^{-1}(\mathbf{x}-\boldsymbol{\mu})).\label{eq:guassian_pdf}
\end{equation}
Hence, to obtain the Gaussian generative model for our application,
we only need to estimate its mean and covariance. This
reduces the difficult problem of data modeling into the much simpler
one of statistical estimation. This is particularly important in DP
since it has been shown that simple statistics of the database
can be privately learned accurately \cite{RefWorks:477,RefWorks:478}.
In addition, this model is supported by the following rationales.
\begin{itemize}
\item It is supported by the Diaconis-Freedman-Meckes (DFM) effect \cite{RefWorks:439},
which may be viewed as an analog of the Central Limit Theorem (CLT)
\cite{RefWorks:342,RefWorks:343} in the feature space. This effect
will be discussed in detail in Section \ref{subsec:Diaconis-Freedman-Effect}.
\item It is simple to use and well-understood. Sampling from it is straightforward,
since there exist multiple available packages, e.g. \cite{RefWorks:227,RefWorks:228,RefWorks:229}. 

\item Various methods in data analysis have both directly and indirectly
utilized the Gaussian model, e.g. linear/quadratic discriminant analysis
\cite{RefWorks:292,RefWorks:482}, PCA \cite{RefWorks:442}\cite[chapter 12]{RefWorks:51},
Gaussian Bayes net \cite[chapter 10]{RefWorks:51}, Gaussian Markov
random field \cite{RefWorks:484}, Restricted Boltzmann machine (RBM)
network \cite{RefWorks:484}, Radial Basis Function (RBF) network
\cite{RefWorks:486}, factor analysis \cite[chapter 12]{RefWorks:51},
and SVM \cite{RefWorks:483}.
\end{itemize}
In spite of these advantages, we acknowledge that there is possibly
no single parametric model that can universally capture all possible
datasets, and generalizing our approach to non-parametric generative
models is an interesting future work.

\vspace{-1em}
\subsection{Diaconis-Freedman-Meckes (DFM) \label{subsec:Diaconis-Freedman-Effect}}

Intuitively, the Diaconis-Freedman-Meckes (DFM) effect \textendash{}
initially proved by Diaconis and Freedman in 1984 \cite{RefWorks:435}
\textendash{} states that \emph{"under suitable conditions,
most projections are approximately Gaussian". }Later,
the precise statement of the conditions and the appropriate projections
has been proved by Meckes \cite{RefWorks:439,RefWorks:513}, and the
phenomenon has been substantiated both theoretically \cite{RefWorks:436}
and empirically \cite{RefWorks:437,RefWorks:445}. This work
considers the theorem by Meckes \cite{RefWorks:439} as follows.
\begin{table}
\begin{centering}
\noindent\fbox{\begin{minipage}[t]{1\columnwidth - 2\fboxsep - 2\fboxrule}%
Let the data $X\in\mathbb{R}^{m}$ be drawn from a distribution $\mathcal{X}$,
which satisfies:
\begin{align*}
\mathbb{E}\left[\left\Vert X\right\Vert ^{2}\right] & =\sigma^{2}m,\\
\sup_{\mathbf{v}\in\mathbb{S}^{m-1}}\mathbb{E}\left\langle \mathbf{v},X\right\rangle ^{2} & \leq1,\\
\mathbb{E}\left[\left|\left\Vert X\right\Vert ^{2}\sigma^{-2}-m\right|\right] & \leq c\sqrt{m}.
\end{align*}
\end{minipage}}
\par\end{centering}
\caption{The regularity conditions for the DFM effect. $\sigma$ is the variance
defined in Theorem \ref{thm:proj_is_gauss}, $c>0$ is a constant,
and $\mathbb{S}^{m-1}$ is the topological sphere (cf. \cite{RefWorks:524}).
\label{tab:The-regularity-conditions}}
\vspace{-1em}
\end{table}
\begin{thm}[{DFM effect \cite[Corollary 4]{RefWorks:439}}]
\label{thm:proj_is_gauss}Let $\mathbf{W}\in\mathbb{R}^{m\times p}$
be a random projection matrix with orthonormal columns, $X\in\mathbb{R}^{m}$
be data drawn i.i.d. from an unknown distribution $\mathcal{X}$,
which satisfies the regularity conditions in Table \ref{tab:The-regularity-conditions},
and let $\widetilde{X}=\mathbf{W}^{T}X\in\mathbb{R}^{p}$ be the projection
of $X$ via $\mathbf{W}$, which has the distribution $\widetilde{\mathcal{X}}$.
Then, for $p\ll m$, with high probability,
\[
\widetilde{\mathcal{X}}\overset{d_{BL}}{\approx}\sigma\mathcal{N}(\mathbf{0},\mathbf{I}),
\]
where $\mathcal{N}(\mathbf{0},\mathbf{I})$ is the standard multi-variate Gaussian distribution with $p$ dimensions, $\sigma$ is the variance \chang{of the Gaussian distribution}, and $\overset{d_{BL}}{\approx}$ is the approximate
equality in distribution with respect to the conditional bounded-Lipschitz
distance.
\end{thm}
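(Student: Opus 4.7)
The plan is to combine three classical ingredients: (i) concentration of $\|X\|$ around $\sigma\sqrt{m}$, (ii) the approximate Gaussianity of the first $p$ coordinates of a uniform point on $\mathbb{S}^{m-1}$, and (iii) a concentration-of-measure argument on the Stiefel manifold $V_{m,p}$ of orthonormal $p$-frames in $\mathbb{R}^{m}$. The bounded-Lipschitz distance will be attacked via its dual representation as a supremum over test functions $f:\mathbb{R}^{p}\to\mathbb{R}$ with $\|f\|_{\infty},\|f\|_{\text{Lip}}\le 1$.

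First, I would use the regularity conditions to establish norm concentration: the condition $\mathbb{E}\|X\|^{2}=\sigma^{2}m$ together with $\mathbb{E}\bigl|\|X\|^{2}\sigma^{-2}-m\bigr|\le c\sqrt{m}$ implies, via Markov's inequality, that $\|X\|=\sigma\sqrt{m}\bigl(1+O(m^{-1/2})\bigr)$ with high $\mathcal{X}$-probability. Conditioning on a fixed $X$ with $\|X\|=r\approx\sigma\sqrt{m}$, the projection $\mathbf{W}^{T}X$ has (over the Haar-random $\mathbf{W}$) the same distribution as $r$ times the first $p$ coordinates of a uniform point on $\mathbb{S}^{m-1}$, by rotational invariance of the Haar measure on $V_{m,p}$. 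A standard computation with the surface area of spherical caps shows that this vector is $d_{BL}$-close to $\sigma\mathcal{N}(\mathbf{0},\mathbf{I}_{p})$ with error of order $\sqrt{p/m}$, which is vanishing precisely in the regime $p\ll m$.

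Next, I would handle the conditional nature of the claim: the theorem asserts closeness not merely in expectation over $\mathbf{W}$ but with high probability. For a fixed 1-bounded 1-Lipschitz test function $f$, define
\[
\Phi_{f}(\mathbf{W})=\mathbb{E}_{X}f(\mathbf{W}^{T}X)-\mathbb{E}_{Z\sim\sigma\mathcal{N}(\mathbf{0},\mathbf{I})}f(Z).
\]
The previous paragraph controls $\mathbb{E}_{\mathbf{W}}\Phi_{f}(\mathbf{W})$. To promote this to a high-probability statement, I would invoke Lipschitz concentration for the Haar measure on $V_{m,p}$ (log-Sobolev / Gromov--Milman), which gives sub-Gaussian deviations of $\Phi_{f}(\mathbf{W})$ around its mean, with concentration scale governed by the Lipschitz constant of $\mathbf{W}\mapsto\mathbb{E}_{X}f(\mathbf{W}^{T}X)$. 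The second regularity condition $\sup_{\mathbf{v}\in\mathbb{S}^{m-1}}\mathbb{E}\langle\mathbf{v},X\rangle^{2}\le 1$ is exactly what is needed to bound this Lipschitz constant uniformly in $f$.

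The main obstacle, I expect, is not any single one of these ingredients but their simultaneous use: converting the pointwise-in-$f$ concentration into a uniform bound over the entire bounded-Lipschitz test class that defines $d_{BL}$. The cleanest route is a chaining / metric-entropy argument on the unit ball of $\text{Lip}_{1}\cap L^{\infty}$ (restricted to a ball in $\mathbb{R}^{p}$ where the Gaussian mass is essentially supported), combined with the sub-Gaussian tails from step three; the fact that $p$ is much smaller than $m$ keeps the entropy term much smaller than the concentration exponent. Assembling norm concentration of $X$, the Gaussian-coordinates calculation on $\mathbb{S}^{m-1}$, and Stiefel concentration then yields the stated $d_{BL}$-closeness with high probability over $\mathbf{W}$, recovering Meckes's corollary.
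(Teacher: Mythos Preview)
The paper does not supply its own proof of this statement: Theorem~\ref{thm:proj_is_gauss} is quoted as Corollary~4 of Meckes~\cite{RefWorks:439} and used as a black box throughout. There is therefore no in-paper argument to compare your proposal against.

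That said, your sketch is a faithful outline of Meckes's actual proof. The three ingredients you isolate---concentration of $\|X\|$ near $\sigma\sqrt{m}$ from the third regularity condition, the classical fact that the first $p$ coordinates of a uniform point on $\mathbb{S}^{m-1}$ are approximately Gaussian when $p\ll m$, and Lipschitz concentration for Haar measure on the Stiefel manifold with the Lipschitz constant of $\mathbf{W}\mapsto\mathbb{E}_{X}f(\mathbf{W}^{T}X)$ controlled via $\sup_{\mathbf{v}}\mathbb{E}\langle\mathbf{v},X\rangle^{2}\le 1$---are precisely the engine of her argument. The one place where your route is heavier than necessary is the uniformization step: rather than chaining over the unit ball of $\mathrm{Lip}_{1}\cap L^{\infty}$, Meckes observes that the map
\[
\mathbf{W}\;\longmapsto\; d_{BL}\bigl(\mathcal{L}(\mathbf{W}^{T}X\mid\mathbf{W}),\,\sigma\mathcal{N}(\mathbf{0},\mathbf{I})\bigr)
\]
is itself Lipschitz on the Stiefel manifold (again thanks to the second regularity condition), so concentration can be applied directly to this single functional. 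This bypasses the metric-entropy bookkeeping entirely; your chaining argument would close, but is more work than the problem demands.
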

\begin{rem}
The conditional bounded-Lipschitz distance $d_{BL}$ is a distance metric, which can be used to measure the similarity between distributions. More detail on $d_{BL}$ can be found in \cite{RefWorks:531}. Here, the notion $\overset{d_{BL}}{\approx}$ is used to indicate that the distance between $\widetilde{\mathcal{X}}$ and $\sigma\mathcal{N}(\mathbf{0},\mathbf{I})$ is bounded by a small value \cite{RefWorks:439}.
\end{rem}

Theorem \ref{thm:proj_is_gauss} suggests that, under the regularity
conditions, most \emph{random orthonormal (RON) projections} of the
data are close to Gaussian in distribution. More specifically, Meckes
\cite{RefWorks:439} suggests that the Gaussian phenomenon generally
occurs for $p<\frac{2\log(m)}{\log(\log(m))}$. For example, if the
original dimension of the dataset is $m=100$, the projected data
would approach Gaussian with $p\leq13$. Intuitively, the regularity
conditions assure that the data are well-spread around the mean with
a finite second moment. The convergence to the standard Gaussian also
implies that the mean of $X$ is zero. However, this is less critical
since $d_{BL}$ is a distance metric \cite{RefWorks:531}, so mean-shift
can be shown to result in a Gaussian with a scaled mean-shift
(cf. \cite{RefWorks:531,RefWorks:513}).

\vspace{-0.5em}
\section{RON-Gauss: Exploiting the DFM Effect for Non-Interactive Differential
Privacy}

\label{sec:RON-Gauss}

Based on the DFM effect discussed in Section \ref{subsec:Diaconis-Freedman-Effect},
we present our approach for the non-interactive DP data release: the
\emph{RON-Gauss} model. In the subsequent discussion, we first give
an overview of the RON-Gauss model. Then, we discuss the approach used in RON-Gauss with corresponding  theoretical analyses. Finally, we present algorithms
to implement RON-Gauss for both unsupervised and supervised
learning tasks, and prove that the data generated from RON-Gauss preserve $\epsilon$-DP.

\vspace{-1em}
\subsection{Overview \label{subsec:Overview}}

RON-Gauss stands for \emph{\textbf{R}andom \textbf{O}rtho\textbf{N}ormal projection with \textbf{GAUSS}ian
generative model}. As its name suggests, RON-Gauss has two components
- dimensionality reduction (DR) via random orthonormal (RON) projection,
and parametric data modeling via the Gaussian generative model. Each
component plays an important role in RON-Gauss as follows.

The DR via random orthonormal (RON) projection has two purposes. First,
as previous works have shown \cite{RefWorks:313,RefWorks:314,RefWorks:337,RefWorks:338,RefWorks:339,RefWorks:224},
DR can reduce sensitivity of the data. This is true for many DR techniques.
However, in this work, we choose the RON projection due to the second purpose of DR
in the RON-Gauss model, i.e. to trigger the DFM effect. This is verified by Theorem \ref{thm:proj_is_gauss} as proved by Meckes \cite{RefWorks:439}.

The parametric modeling via the Gaussian generative model also has
two purposes. First, it allows us to fully exploit the DFM effect
since, unlike most practical data-analytic settings, we know the distribution
of the data from the effect. Second, it allows us to reduce the
difficult problem of non-interactive private data release into the
more amenable one of DP statistical estimation. Particularly, it reduces
the problem into that of privately estimating the mean and covariance
\textendash{} a problem which has seen success in DP literature (cf.
\cite{RefWorks:477,RefWorks:478,RefWorks:479,RefWorks:185,RefWorks:194,RefWorks:195,RefWorks:249,RefWorks:220}).

Combining these two components is crucial for getting high utility
from the released data, as we will demonstrate in our experiments, and we highlight the main differences between our approach and
previous works in non-interactive DP data release as follows.
\begin{itemize}
\item Although prior works have used DR for improving
utility of the released data (cf. Section \ref{subsed:dim_for_improve_utility}),
these works do not use the Gaussian generative model. Hence, they do not
fully exploit the DFM effect. 
\item Similarly, there have been prior works that use generative models for synthesizing
private data (cf. Section \ref{subsec:generative_models}). However,
without DR, the sensitivity is generally large for high-dimensional
data, and, more importantly, the DFM effect does not apply.
\item Unlike the work by Blocki et al.
\cite{RefWorks:313}, we do not use random projection to provide DP.
In RON-Gauss, DP is provided after the projection via the Laplace
mechanism \emph{on the Gaussian generative model}. 
\item Unlike previous
works that use i.i.d. Gaussian or Bernoulli random projection \cite{RefWorks:314,RefWorks:338,RefWorks:400,RefWorks:337},
we use RON projection, which has been proved to be suitable
for the DFM effect (Theorem \ref{thm:proj_is_gauss}).
\end{itemize}

\chang{Finally, we acknowledge that although RON-Gauss is designed to be application-independent, it may not be suitable for every task. In this work, we focus on popular machine learning tasks including clustering, regression, and classification. As discussed in Section \ref{subsec:gaussian_model}, many machine learning algorithms implicitly or explicitly utilize the Gaussian model, so RON-Gauss is generally suitable for these applications.
}

\vspace{-1em}
\subsection{Approach and Theoretical Analysis\label{subsec:Theoretical-Basis}}

The RON-Gauss model uses the following steps:
\begin{enumerate}
\item Pre-processing to satisfy the conditions for the DFM effect (Theorem
\ref{thm:proj_is_gauss}).
\begin{enumerate}
\item Pre-normalization.
\item Data centering.
\item Data re-normalization.
\end{enumerate}
\item RON projection.
\item Gaussian generative model estimation.
\end{enumerate}
\vspace{-1em}
We provide the detail of each process as follows.

\vspace{-1em}
\subsubsection{Data Pre-Processing}

Given a dataset with $n$ samples
and $m$ features, to utilize the DFM effect, we want to ensure that
the data satisfy the regularity conditions of Theorem \ref{thm:proj_is_gauss}.
We show that the following \emph{sample-wise normalization} ensures
the conditions are satisfied.
\begin{lem}[Sample-wise normalization]
\label{lem:normalization}Let $D\in\mathbb{R}^{m}$ be data drawn
i.i.d. from a distribution $\mathcal{D}$. Let $X\in\mathbb{R}^{m}$
be derived from $D$ by the sample-wise normalization\footnote{Here, it is implicitly assumed that $\left\Vert D\right\Vert$ is finite, which is typically the case when we are given a training dataset.}:
\[
X=\frac{D}{\left\Vert D\right\Vert }.
\]
Then, 
\begin{align*}
\mathbb{E}\left[\left\Vert X\right\Vert ^{2}\right] & =1,\\
\sup_{\mathbf{v}\in\mathbb{S}^{m-1}}\mathbb{E}\left\langle \mathbf{v},X\right\rangle ^{2} & \leq1,\\
\mathbb{E}\left[\left|\left\Vert X\right\Vert ^{2}\sigma^{-2}-m\right|\right] & =\left|\sigma^{-2}-m\right|.
\end{align*}
\end{lem}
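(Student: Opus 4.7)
The plan is to observe that the sample-wise normalization $X = D/\|D\|$ makes $\|X\|$ deterministically equal to $1$, and then each of the three claims reduces to a one-line consequence of this fact (plus Cauchy--Schwarz for the middle one). So the bulk of the work is really just bookkeeping; there is no substantive obstacle, the lemma is essentially verifying that the normalization was chosen correctly to fit the hypotheses of Theorem \ref{thm:proj_is_gauss}.

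First I would compute $\|X\|^2 = \langle D/\|D\|,\,D/\|D\|\rangle = \|D\|^2/\|D\|^2 = 1$, which uses only that $\|D\|$ is finite and nonzero almost surely (noted in the lemma's footnote). Taking expectations on both sides immediately yields $\mathbb{E}[\|X\|^2] = 1$, establishing the first identity. For the third identity, since $\|X\|^2$ is the deterministic constant $1$, the random variable inside the absolute value is also the constant $\sigma^{-2} - m$, so $\mathbb{E}[|\|X\|^2\sigma^{-2} - m|] = |\sigma^{-2}-m|$.

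For the second identity, I would fix an arbitrary $\mathbf{v} \in \mathbb{S}^{m-1}$ and apply the Cauchy--Schwarz inequality pointwise to get $\langle \mathbf{v}, X\rangle^2 \leq \|\mathbf{v}\|^2 \|X\|^2 = 1$ almost surely. Taking expectations yields $\mathbb{E}\langle \mathbf{v}, X\rangle^2 \leq 1$, and since the bound is uniform in $\mathbf{v}$, the supremum over $\mathbb{S}^{m-1}$ is also at most $1$. Together these three pieces give the statement of the lemma. The only subtlety worth flagging in the write-up is the well-definedness of $X$ when $D = \mathbf{0}$ with positive probability; the footnote in the lemma statement sidesteps this by assuming $\|D\|$ is finite (and implicitly nonzero) for training data, so no additional measure-theoretic care is needed.
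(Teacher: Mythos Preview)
Your proposal is correct and follows essentially the same approach as the paper: the paper also notes that the first and third equalities are immediate from $\|X\|=1$, and derives the second inequality via Cauchy--Schwarz using $\mathbb{E}\langle\mathbf{v},X\rangle^{2}\le\mathbb{E}[\|\mathbf{v}\|^{2}\|X\|^{2}]=1$. Your write-up is slightly more explicit about the pointwise nature of the bound and the well-definedness caveat, but the argument is the same.
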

\begin{proof}
The first and third equalities are obvious by observing that $\left\Vert X\right\Vert =1$.
The second inequality follows from the Cauchy\textendash Schwarz
inequality \cite{RefWorks:525,RefWorks:526} as follows. $\mathbb{E}\left\langle \mathbf{v},X\right\rangle ^{2}\leq\mathbb{E}\left[\left\Vert \mathbf{v}\right\Vert ^{2}\left\Vert X\right\Vert ^{2}\right]=1$,
since $\mathbf{v}$ is on the surface of the unit sphere. 
\end{proof}
From this lemma, we can verify that the sample-wise normalization
satisfies the regularity conditions in Theorem \ref{thm:proj_is_gauss}
simply by considering $\sigma=1/\sqrt{m}$. Recall from Theorem \ref{thm:proj_is_gauss} that the choice of $\sigma$ indicates what the variance of the projected data will be. In other words, a low value of $\sigma$ geometrically signifies a narrow bell curve of Gaussian. Hence, in our application,
this is the normalization we employ. However, we observe that this
normalization has an effect of placing all data samples onto the surface
of the sphere, i.e. $\mathbf{x}_{i}\in\mathbb{S}^{m-1}$. Hence, it
is beneficial to center the data before performing this normalization
to ensure that the data remain well-spread after the normalization.
For this reason, data pre-processing for RON-Gauss consists of three
steps \textendash{} pre-normalization, data centering, and data re-normalization.
As we will discuss shortly, the pre-normalization is to aid with the
sensitivity derivation of the sample mean used in the centering process, while the re-normalization
is to ensure the regularity conditions in Theorem \ref{thm:proj_is_gauss}
is satisfied before the projection. These three steps are discussed in detail as follows.

\vspace{-1em}
\paragraph{Pre-Normalization}

We start with a given dataset $\mathbf{X}\in\mathbb{R}^{m\times n}$
with $n$ samples and $m$ features, and perform the preliminary sample-wise
normalization as follows.
\[
\mathbf{x}_{i}:=\frac{\mathbf{x}_{i}}{\left\Vert \mathbf{x}_{i}\right\Vert },
\]
for all $\mathbf{x}_{i}\in\mathbf{X}$. This normalization ensures
that $\left\Vert \mathbf{x}_{i}\right\Vert =1$ for every sample,
which will be important for the derivation of the $L_{1}$-sensitivity in the next step.

\vspace{-1em}
\paragraph{Data Centering}

Data centering is performed before RON projection in order to reduce
the bias of the covariance estimation for the Gaussian generative
model and to ensure that the data are well-spread.
Data centering is achieved simply by subtracting the DP-mean of the
dataset. Given the pre-normalized dataset, $\{\mathbf{x}_{i}\in\mathbb{R}^{m};\left\Vert \mathbf{x}_{i}\right\Vert =1\}_{i=1}^{n}$,
the sample mean is $\boldsymbol{\mu}=\frac{1}{n}\sum_{i=1}^{n}\mathbf{x}_{i}$,
and the $L_{1}$-sensitivity of the sample mean can be computed as
follows.
\begin{lem}
\label{lem:sens_mean}Given a sample-wise normalized dataset $\mathbf{X}\in\mathbb{R}^{m\times n}$,
the $L_{1}$-sensitivity of the sample mean is $s(f)=2\sqrt{m}/n$.
\end{lem}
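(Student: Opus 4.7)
The plan is to bound $\|\boldsymbol{\mu} - \boldsymbol{\mu}'\|_1$ for neighboring pre-normalized datasets $\mathbf{X},\mathbf{X}'$ and then show the bound is tight via an explicit construction. Under the bounded neighboring notion adopted in Remark \ref{rem:neighboring_notion}, $\mathbf{X}$ and $\mathbf{X}'$ agree in every column except, say, the $k$-th, so the difference of sample means telescopes to
\[
\boldsymbol{\mu}-\boldsymbol{\mu}' = \frac{1}{n}\bigl(\mathbf{x}_k-\mathbf{x}'_k\bigr),
\]
reducing the problem to bounding the $L_1$-distance between two arbitrary sample-wise normalized vectors.

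Next, I would apply the triangle inequality $\|\mathbf{x}_k-\mathbf{x}'_k\|_1 \le \|\mathbf{x}_k\|_1+\|\mathbf{x}'_k\|_1$ and convert each $L_1$-norm into an $L_2$-bound using the standard inequality $\|\mathbf{v}\|_1 \le \sqrt{m}\,\|\mathbf{v}\|_2$ for $\mathbf{v}\in\mathbb{R}^m$, which itself follows from Cauchy--Schwarz applied to $\mathbf{v}$ and the all-ones vector $\mathbf{1}$. The pre-normalization step guarantees $\|\mathbf{x}_k\|_2=\|\mathbf{x}'_k\|_2=1$, so each term is at most $\sqrt{m}$, giving $\|\mathbf{x}_k-\mathbf{x}'_k\|_1 \le 2\sqrt{m}$ and therefore $\|\boldsymbol{\mu}-\boldsymbol{\mu}'\|_1 \le 2\sqrt{m}/n$.

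To show $2\sqrt{m}/n$ is actually the supremum, I would exhibit a concrete extremal pair: take $\mathbf{x}_k = \tfrac{1}{\sqrt{m}}\mathbf{1}$ and $\mathbf{x}'_k = -\tfrac{1}{\sqrt{m}}\mathbf{1}$. Both have unit $L_2$-norm (so they are valid sample-wise normalized points and can appear in $\mathbf{X},\mathbf{X}'$), and a direct computation gives $\|\mathbf{x}_k-\mathbf{x}'_k\|_1 = m\cdot\tfrac{2}{\sqrt{m}} = 2\sqrt{m}$, saturating the bound. Combined with the previous paragraph, this yields $S(f)=2\sqrt{m}/n$.

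There is no significant obstacle here; the only subtle point worth emphasizing in the write-up is that the pre-normalization of Section 4.2.1 is with respect to the $L_2$-norm, so the hypothesis $\|\mathbf{x}_i\|_2=1$ is precisely what lets us invoke the $L_1$-versus-$L_2$ inequality. Without making that explicit, the factor of $\sqrt{m}$ in the numerator can appear mysterious; the proof sketch above is essentially a two-line argument once this observation is in place.
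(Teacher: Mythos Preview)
Your proof is correct and follows essentially the same route as the paper: reduce to the single differing column, then combine the triangle inequality with the $\|\mathbf{v}\|_1\le\sqrt{m}\,\|\mathbf{v}\|_2$ norm relation (the paper merely applies these two steps in the opposite order). Your explicit extremal pair showing the bound is attained is a welcome addition, since the paper states equality but only proves the upper bound.
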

\begin{proof}
For neighboring datasets $\mathbf{X}$, $\mathbf{X}'$,
\begin{align*}
s(f) & =\sup_{\mathbf{X},\mathbf{X}'}\left\Vert f(\mathbf{X})-f(\mathbf{X}')\right\Vert _{1}\\
 & =\sup\frac{1}{n}\left\Vert \mathbf{x}_{i}-\mathbf{x}_{i}'\right\Vert _{1}\leq\sup\frac{\sqrt{m}}{n}\left\Vert \mathbf{x}_{i}-\mathbf{x}_{i}'\right\Vert _{F}\\
 & \leq\sup\frac{\sqrt{m}}{n}(\left\Vert \mathbf{x}_{i}\right\Vert _{F}+\left\Vert \mathbf{x}_{i}'\right\Vert _{F})=\frac{2\sqrt{m}}{n},
\end{align*}
\chang{where the first inequality uses the norm relation \cite[page 333]{RefWorks:208}.}
\end{proof}
With the $L_{1}$-sensitivity of the sample mean, we can then derive
DP-mean $\boldsymbol{\mu}^{DP}$ via the Laplace mechanism (Theorem
\ref{thm:laplace_matrix}), and perform data centering by $\bar{\mathbf{X}}=\mathbf{X}-\boldsymbol{\mu}^{DP}\mathbf{1}^{T}$,
where $\mathbf{1}$ is the vector with all ones. We note that, although
the mean is DP-protected, the centered data are not DP-protected,
so they cannot be released. RON-Gauss only uses the centered data
to estimate the covariance, which is then DP-protected. Hence, the centered data are never published.
In addition, as will be important to the DP analysis later, we
note that this centering process ensures that any neighboring datasets
would be centered by the same mean. Hence, neighboring $\bar{\mathbf{X}}$
and $\bar{\mathbf{X}}'$ would still differ by only one record.

\vspace{-1em}
\paragraph{Data Re-Normalization}

After adjusting the mean, the centered dataset $\bar{\mathbf{X}}$
would likely not remain normalized. Hence, to ensure the regularity
conditions in Theorem \ref{thm:proj_is_gauss}, we re-normalize the
data after the centering process using the same sample-wise normalization
scheme, i.e.
\[
\bar{\mathbf{x}}_{i}:=\frac{\bar{\mathbf{x}}_{i}}{\left\Vert \bar{\mathbf{x}}_{i}\right\Vert }.
\]
Hence, we again have $\left\Vert \bar{\mathbf{x}}_{i}\right\Vert =1$
for every sample. In addition, neighboring datasets still differ by
only one sample since the normalization factor only depends on the
corresponding sample, but not on any other sample.

\vspace{-1em}
\paragraph{Summary}
\noun{Data\_Preprocessing} (Algorithm \ref{alg:Data-Preprocessing})
summarizes these steps for pre-processing the data, which include pre-normalizing,
centering, and re-normalizing. The DP mean derivation uses the Laplace
mechanism with the sensitivity in Lemma \ref{lem:sens_mean}. \chang{If needed, the DP mean can also be acquired from this algorithm.}

\begin{algorithm}
{\small \par
\textbf{Input}{: dataset $\mathbf{X}\in\mathbb{R}^{m\times n}$
and $\epsilon_{\mu}>0$.}{\par}
\vspace{0.5em}
\begin{enumerate}
\item {Pre-normalize: $\mathbf{x}_{i}:=\mathbf{x}_{i}/\left\Vert \mathbf{x}_{i}\right\Vert $
for all $\mathbf{x}_{i}\in\mathbf{X}$.}{\par}
\item {Derive the DP mean: $\boldsymbol{\mu}^{DP}=(\frac{1}{n}\sum_{i=1}^{n}\mathbf{x}_{i})+Z$,
where $z_{j}(i)$ is drawn i.i.d. from $Lap(2\sqrt{m}/n\epsilon_{\mu})$.}{\par}
\item {Center the data: $\bar{\mathbf{X}}=\mathbf{X}-\boldsymbol{\mu}^{DP}\mathbf{1}^{T}$.}{\par}
\item {Re-normalize: $\bar{\mathbf{x}}_{i}=\bar{\mathbf{x}}_{i}/\left\Vert \bar{\mathbf{x}}_{i}\right\Vert $
for all $\bar{\mathbf{x}}_{i}\in\bar{\mathbf{X}}$.
}{\par}
\end{enumerate}
\vspace{-0.5em}
\textbf{Output}{: $\bar{\mathbf{X}}$, $\boldsymbol{\mu}^{DP}$. }{\par}
}
\caption{\noun{Data\_Preprocessing} \label{alg:Data-Preprocessing}}

\end{algorithm}

\vspace{-1em}
\subsubsection{RON Projection}

As shown in Lemma \ref{lem:normalization}, after the pre-processing
steps, $\bar{\mathbf{X}}$ can be shown to be in a form that complies
with the regularity conditions in Theorem \ref{thm:proj_is_gauss}.
The next step is to project $\bar{\mathbf{X}}$ onto a low-dimensional
space using the random orthonormal (RON) projection matrix $\mathbf{W}\in\mathbb{R}^{m\times p}:\mathbf{W}^{T}\mathbf{W}=\mathbf{I}$.
The RON projection matrix is derived independently of the dataset,
so it does not leak privacy, and no privacy budget is needed for its
acquisition.

The projection is done via the linear transformation $\widetilde{\mathbf{x}_{i}}=\mathbf{W}^{T}\bar{\mathbf{x}}_{i}\in\mathbb{R}^{p}$
for each sample, or, equivalently, in the matrix notation $\widetilde{\mathbf{X}}=\mathbf{W}^{T}\mathbf{X}\in\mathbb{R}^{p\times n}$.
Since the projection is done sample-wise, the neighboring datasets
would still differ by only one sample after the projection. This property
will be important to the DP analysis of the RON projection later.
In addition, the other important theoretical aspect of the RON projection
step is the bound on the projected data. This is provided by the following
lemma.
\begin{lem} \label{lem:proj_range} 
Given a normalized data sample $\bar{\mathbf{x}}\in\mathbb{R}^{m}$
and a random orthonormal (RON) projection matrix, $\mathbf{W}\in\mathbb{R}^{m\times p}:\mathbf{W}^{T}\mathbf{W}=\mathbf{I}$,
let $\widetilde{\mathbf{x}}=\mathbf{W}^{T}\bar{\mathbf{x}}$ be the projection
via $\mathbf{W}$. Then, $\left\Vert \widetilde{\mathbf{x}}\right\Vert _{F}\leq1$. 
\end{lem}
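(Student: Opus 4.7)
The plan is to directly compute $\|\widetilde{\mathbf{x}}\|_F^2$ and show it is bounded by $1$, using two facts: (i) the pre-processing in \noun{Data\_Preprocessing} guarantees $\|\bar{\mathbf{x}}\| = 1$, and (ii) the columns of $\mathbf{W}$ are orthonormal, i.e., $\mathbf{W}^T\mathbf{W} = \mathbf{I}_p$. Since $\widetilde{\mathbf{x}}$ is a vector, the Frobenius norm coincides with the Euclidean norm, so it suffices to bound $\|\widetilde{\mathbf{x}}\|_2$.

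First, I would expand
\[
\|\widetilde{\mathbf{x}}\|_2^2 \;=\; \widetilde{\mathbf{x}}^T \widetilde{\mathbf{x}} \;=\; (\mathbf{W}^T \bar{\mathbf{x}})^T (\mathbf{W}^T \bar{\mathbf{x}}) \;=\; \bar{\mathbf{x}}^T \mathbf{W}\mathbf{W}^T \bar{\mathbf{x}}.
\]
The key step is to recognize that $\mathbf{P} := \mathbf{W}\mathbf{W}^T \in \mathbb{R}^{m\times m}$ is an orthogonal projection onto the column space of $\mathbf{W}$ (this is immediate from $\mathbf{W}^T\mathbf{W} = \mathbf{I}$, which gives $\mathbf{P}^2 = \mathbf{W}(\mathbf{W}^T\mathbf{W})\mathbf{W}^T = \mathbf{W}\mathbf{W}^T = \mathbf{P}$, and $\mathbf{P}^T = \mathbf{P}$). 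Consequently $\mathbf{I}_m - \mathbf{P}$ is positive semi-definite, so $\bar{\mathbf{x}}^T \mathbf{P} \bar{\mathbf{x}} \leq \bar{\mathbf{x}}^T \bar{\mathbf{x}} = \|\bar{\mathbf{x}}\|_2^2$.

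Combining these with the normalization $\|\bar{\mathbf{x}}\|_2 = 1$ yields $\|\widetilde{\mathbf{x}}\|_2^2 \leq 1$, and hence $\|\widetilde{\mathbf{x}}\|_F \leq 1$. There is no genuine obstacle here; the only subtlety is avoiding the common error of assuming $\mathbf{W}\mathbf{W}^T = \mathbf{I}_m$, which is \emph{not} true when $p < m$. Instead, one must use only the correct identity $\mathbf{W}^T\mathbf{W} = \mathbf{I}_p$ to argue that $\mathbf{W}\mathbf{W}^T$ is a (non-identity) orthogonal projector, and invoke the fact that orthogonal projectors are contractions in the Euclidean norm.
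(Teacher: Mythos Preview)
Your proof is correct and follows essentially the same approach as the paper: both recognize $\mathbf{P}=\mathbf{W}\mathbf{W}^{T}$ as an orthogonal projection and use its contraction property on $\bar{\mathbf{x}}$. Your argument is slightly more streamlined, since you go directly from $\|\widetilde{\mathbf{x}}\|_2^2=\bar{\mathbf{x}}^{T}\mathbf{P}\bar{\mathbf{x}}$ to the bound via positive semidefiniteness of $\mathbf{I}-\mathbf{P}$, whereas the paper first establishes $\|\mathbf{W}^{T}\bar{\mathbf{x}}\|_F=\|\mathbf{P}\bar{\mathbf{x}}\|_F$ and then invokes Cauchy--Schwarz; the underlying idea is the same.
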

\begin{proof}
The proof is provided in Appendix \ref{sec:proof_lem_proj_range}.
\end{proof}

Lemma \ref{lem:proj_range} indicates that the RON projection of the
normalized data does not change their Frobenius norm. This will be critical in the DP analysis of RON-Gauss algorithms in the next step.

\noun{RON\_Projection} (Algorithm \ref{alg:RON_Projection}) summarizes
the current step that projects the pre-processed data onto a lower dimension
$p$ via RON projection. The RON projection matrix can be derived
efficiently via the QR factorization \cite{RefWorks:533,RefWorks:534},
as shown in the algorithm. Specifically, the RON projection matrix
is constructed by stacking side-by-side $p$ column vectors of the
unitary matrix $\mathbf{Q}$ from the QR factorization. Then, the projection is done via the linear operation
{\small{}$\widetilde{\mathbf{X}}=\mathbf{W}^{T}\bar{\mathbf{X}}\in\mathbb{R}^{p\times n}$.}{\small} \chang{If needed, the RON projection matrix can also be acquired from the output of the algorithm.}

We further note that the projected data $\widetilde{\mathbf{X}}$
are still not DP-protected. This is one key difference between our
work and that of Blocki et al. \cite{RefWorks:313}, which uses random
projection to directly provide DP. In our work, $\widetilde{\mathbf{X}}$
is never released, and we only use it to estimate the covariance of
the Gaussian generative model in the next step. The DP protection
in our work is provided in this next step on the Gaussian generative model.

\begin{algorithm}
{\small \par
\textbf{Input}{: pre-processed dataset $\bar{\mathbf{X}}\in\mathbb{R}^{m\times n}$,
and dimension $p<m$.
}{\par}
\vspace{0.5em}
\begin{enumerate}
\item {Form a matrix $\mathbf{A}\in\mathbb{R}^{m\times m}$ whose elements are drawn i.i.d. from the uniform distribution.}{\par}
\item {Factorize $\mathbf{A}$ via the QR factorization as $\mathbf{A}=\mathbf{Q}\mathbf{R}$,
where $\mathbf{Q}\in\mathbb{R}^{m\times m}:\mathbf{Q}^{T}\mathbf{Q}=\mathbf{I}$.}{\par}
\item {Construct a RON projection matrix $\mathbf{W}=[\mathbf{q}_{1},\ldots,\mathbf{q}_{p}]\in\mathbb{R}^{m\times p}$.}{ \par}
\item {Project the data: $\widetilde{\mathbf{X}}=\mathbf{W}^{T}\bar{\mathbf{X}}\in\mathbb{R}^{p\times n}$.\vspace{-0.5em}
}{\par}
\end{enumerate}
\textbf{Output}{: $\widetilde{\mathbf{X}}\in\mathbb{R}^{p\times n}$,
$\mathbf{W}$.}{ \par}
}
\caption{\noun{RON\_Projection \label{alg:RON_Projection}}}

\end{algorithm}

\vspace{-1em}
\subsubsection{Gaussian Generative Model Estimation}

This step constructs the Gaussian generative model, which is where
the DP protection in RON-Gauss is provided. \chang{We emphasize that RON-Gauss is an output-perturbation algorithm, and we employ the standard DP threat model, i.e. the RON-Gauss algorithm is run by a trusted entity and only the output of the algorithm is available to the public.} The DP-protected Gaussian
generative model is then used to synthesize DP dataset for the non-interactive
DP data release setting we consider. \chang{Synthesizing DP data from a parametric model, as opposed to releasing the model itself, has two benefits. First, existing machine learning software can readily be used with the DP data as if they were the real data. Second, it presents an additional challenge for an attacker aiming to perform inference attacks, since the attacker would also need to estimate the model parameters from the released data, incurring further errors.}

Before delving into the detail
of this step, there is an important distinction to be made about the
data-analytic problems we consider. Since machine learning is currently
the prominent tool in data analysis, we follow the convention in machine
learning and consider two classes of problems \textendash{} \emph{unsupervised
learning} and \emph{supervised learning}. The main difference between
the two is that, in the latter, in addition to the feature data in
$\widetilde{\mathbf{X}}$, the \emph{teacher value} or \emph{training
label} $\mathbf{y}\in\mathbb{R}^{n}$ is also required to guide the
data-analytic process. Hence, in the subsequent analysis, we first consider the simpler class
of unsupervised learning, and then, show a simple modification to
include the teacher value for the supervised learning. Additionally,
we conclude with an extension of RON-Gauss to the Gaussian
Mixture Model.

 \vspace{-1em}
\paragraph{Unsupervised Learning}

The unsupervised learning problems do not require the training label, so the Gaussian generative model only
needs to synthesize DP-protected $\widetilde{\mathbf{X}}$. The main
parameter for the Gaussian generative model is the covariance matrix
$\boldsymbol{\Sigma}$, so we need to estimate $\boldsymbol{\Sigma}$
from $\widetilde{\mathbf{X}}$. We use the following formulation for
the sample covariance:
\begin{equation}
\boldsymbol{\Sigma}=\frac{1}{n}\widetilde{\mathbf{X}}\widetilde{\mathbf{X}}^{T}=\frac{1}{n}\sum_{i=1}^{n}\widetilde{\mathbf{x}}_{i}\widetilde{\mathbf{x}}_{i}^{T}\in\mathbb{R}^{p\times p}.\label{eq:cov_est}
\end{equation}
We note that this estimate may be statistically biased since the mean
may not necessarily be zero after the re-normalization. However, this
formulation would yield significantly lower sensitivity than its unbiased
counterpart. This is due to the observation that only one summand can change
for neighboring datasets since, as mentioned in the previous step,
neighboring projected datasets $\widetilde{\mathbf{X}}, \widetilde{\mathbf{X}'}$ still differ by only one sample.
Hence, we are willing to trade the bias for a much lower sensitivity.
In Appendix \ref{sec:sens_mle_cov}, we specifically show that the
saving in sensitivity by our formulation is in the order of $n$,
compared to the MLE of the covariance. Clearly, for large datasets,
this is significant and can be the difference between usable and unusable
models.

Next, we derive the sensitivity of the covariance estimate in Eq.
\eqref{eq:cov_est} as follows.
\begin{lem}
\label{lem:sens_cov} Given a dataset $\mathbf{X}\in\mathbb{R}^{m\times n}$,
let $\widetilde{\mathbf{X}}$ be the pre-processed and RON-projected
dataset via \noun{Data\_Preprocessing} and \noun{RON\_Projection}.
Then, the covariance $\boldsymbol{\Sigma}\in\mathbb{R}^{p\times p}$
in Eq. \eqref{eq:cov_est} has the $L_{1}$-sensitivity of $2\sqrt{p}/n$.
\end{lem}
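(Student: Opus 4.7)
The plan is to mirror the structure of the proof of Lemma \ref{lem:sens_mean}: first argue that only a single summand of the empirical outer-product sum is altered when we swap between neighboring datasets, then collapse the resulting sensitivity computation to a norm bound on a rank-one matrix via Lemma \ref{lem:proj_range}.

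The first step is to check that neighboring ``projected'' datasets $\widetilde{\mathbf{X}}, \widetilde{\mathbf{X}}'$ still differ in exactly one sample. This is immediate because every operation inside \noun{Data\_Preprocessing} and \noun{RON\_Projection} acts on each column independently: the pre-normalization and re-normalization are defined sample-wise, the data-centering step subtracts the \emph{same} DP mean $\boldsymbol{\mu}^{DP}$ from neighboring databases (as emphasized in the prose following Lemma \ref{lem:sens_mean}), and the linear map $\mathbf{W}^T(\cdot)$ is applied sample-wise. Therefore the two projected datasets agree in $n-1$ columns and differ only in the one column corresponding to the swapped record, say $\widetilde{\mathbf{x}}_i$ vs.\ $\widetilde{\mathbf{x}}_i'$.

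The second step is to exploit this to simplify the covariance difference to a rank-two expression:
\[
\boldsymbol{\Sigma}-\boldsymbol{\Sigma}'=\tfrac{1}{n}\bigl(\widetilde{\mathbf{x}}_i\widetilde{\mathbf{x}}_i^{T}-\widetilde{\mathbf{x}}_i'\widetilde{\mathbf{x}}_i'^{T}\bigr),
\]
which is exactly the telescoping property that motivated the biased estimator in the first place. Taking the entrywise $L_1$ norm, applying the triangle inequality, and using $\|\mathbf{u}\mathbf{u}^{T}\|_F=\|\mathbf{u}\|_2^{2}$ for a rank-one outer product gives a sensitivity bound in terms of $\|\widetilde{\mathbf{x}}_i\|_2$ and $\|\widetilde{\mathbf{x}}_i'\|_2$. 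Lemma \ref{lem:proj_range} then pins each of these down: because the inputs to the projection are unit-norm after re-normalization and the RON matrix has orthonormal columns, $\|\widetilde{\mathbf{x}}_i\|_F\leq 1$.

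The remaining, and main, obstacle is converting the Frobenius-norm control of the rank-one outer products into the claimed $L_1$ bound on a $p\times p$ matrix with the sharp $\sqrt{p}$ factor (rather than the $p$ factor that a blind $\|\mathbf{A}\|_1\leq p\|\mathbf{A}\|_F$ on a vectorized $p\times p$ matrix would give). The plan is to exploit the rank-one structure and use Cauchy--Schwarz on the vector $\widetilde{\mathbf{x}}_i\in\mathbb{R}^{p}$ rather than on the vectorized matrix, precisely as in the proof of Lemma \ref{lem:sens_mean} where the $\sqrt{m}$ factor arose from applying the $L_1$-to-$L_2$ norm relation in dimension $m$. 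In our setting the analogous step gives a factor of $\sqrt{p}$, which together with the factor $2$ from the triangle inequality and the $1/n$ prefactor yields the claimed $S(\boldsymbol{\Sigma}) = 2\sqrt{p}/n$.
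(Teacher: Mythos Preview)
Your proposal is essentially the paper's argument: reduce to a single differing column, write the sensitivity as $\tfrac{1}{n}\|\widetilde{\mathbf{x}}_i\widetilde{\mathbf{x}}_i^{T}-\widetilde{\mathbf{x}}_i'\widetilde{\mathbf{x}}_i'^{T}\|_1$, pass to the Frobenius norm, split by the triangle inequality, and finish with $\|\widetilde{\mathbf{x}}_i\widetilde{\mathbf{x}}_i^{T}\|_F\leq\|\widetilde{\mathbf{x}}_i\|_F^{2}\leq 1$ via Lemma~\ref{lem:proj_range}. On the step you flag as the ``main obstacle'': the paper does \emph{not} invoke rank-one structure here at all but simply applies the norm relation $\|\cdot\|_1\leq\sqrt{p}\,\|\cdot\|_F$ (citing \cite[p.~333]{RefWorks:208}) directly to the $p\times p$ difference matrix \emph{before} splitting, so your anticipated detour is not needed to match the paper---and note that your proposed rank-one route, $\|\mathbf{u}\mathbf{u}^{T}\|_1=\|\mathbf{u}\|_1^{2}\leq p\,\|\mathbf{u}\|_2^{2}$, would in fact produce a factor of $p$ rather than $\sqrt{p}$.
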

\begin{proof}
For neighboring datasets $\mathbf{X},\mathbf{X}'$,
\begin{align*}
s(f)= & \sup\left\Vert \frac{1}{n}\widetilde{\mathbf{X}}\widetilde{\mathbf{X}}^{T}-\frac{1}{n}\widetilde{\mathbf{X}'}\widetilde{\mathbf{X}'}^{T}\right\Vert _{1}\\
= & \sup\left\Vert \frac{1}{n}\mathbf{W}^{T}[\mathbf{X}-\boldsymbol{\mu}^{DP}\mathbf{1}^{T}][\mathbf{X}-\boldsymbol{\mu}^{DP}\mathbf{1}^{T}]^{T}\mathbf{W}\right.\\
 & -\left.\frac{1}{n}\mathbf{W}^{T}[\mathbf{X}'-\boldsymbol{\mu}^{DP}\mathbf{1}^{T}][\mathbf{X}'-\boldsymbol{\mu}^{DP}\mathbf{1}^{T}]^{T}\mathbf{W}\right\Vert _{1},\\
= & \sup\frac{1}{n}\left\Vert \sum_{i=1}^{n}\mathbf{W}^{T}[\mathbf{x}_{i}-\boldsymbol{\mu}^{DP}][\mathbf{x}_{i}-\boldsymbol{\mu}^{DP}]^{T}\mathbf{W}\right.\\
 & -\left.\sum_{i=1}^{n}\mathbf{W}^{T}[\mathbf{x}_{i}'-\boldsymbol{\mu}^{DP}][\mathbf{x}_{i}'-\boldsymbol{\mu}^{DP}]^{T}\mathbf{W}\right\Vert _{1},
\end{align*}
where the second equality is simply from the definition of $\widetilde{\mathbf{X}}$
through\noun{ Data\_Preprocessing} and \noun{RON\_Projection}. Since
all of the summands are the same except for one in the neighboring
datasets, we have
\begin{align*}
s(f)= & \sup\frac{1}{n}\left\Vert \mathbf{W}^{T}[\mathbf{x}_{i}-\boldsymbol{\mu}^{DP}][\mathbf{x}_{i}-\boldsymbol{\mu}^{DP}]^{T}\mathbf{W}\right.\\
 & -\left.\mathbf{W}^{T}[\mathbf{x}_{i}'-\boldsymbol{\mu}^{DP}][\mathbf{x}_{i}'-\boldsymbol{\mu}^{DP}]^{T}\mathbf{W}\right\Vert _{1}.
\end{align*}
Then, to simplify the notation and to apply Lemma \ref{lem:proj_range},
we note that $\widetilde{\mathbf{x}}_{i}=\mathbf{W}^{T}[\mathbf{x}_{i}-\boldsymbol{\mu}^{DP}]$
by definition. Hence, 
\begin{align*}
s(f) & =\sup\frac{1}{n}\left\Vert \widetilde{\mathbf{x}}_{i}\widetilde{\mathbf{x}}_{i}^{T}-\widetilde{\mathbf{x}'}_{i}\widetilde{\mathbf{x}'}_{i}^{T}\right\Vert _{1}\\
 & \leq\sup\frac{\sqrt{p}}{n}\left\Vert \widetilde{\mathbf{x}}_{i}\widetilde{\mathbf{x}}_{i}^{T}-\widetilde{\mathbf{x}'}_{i}\widetilde{\mathbf{x}'}_{i}^{T}\right\Vert _{F}\\
 & \leq\sup\frac{2\sqrt{p}}{n}\left\Vert \widetilde{\mathbf{x}}_{i}\widetilde{\mathbf{x}}_{i}^{T}\right\Vert _{F} \leq\sup\frac{2\sqrt{p}}{n}\left\Vert \widetilde{\mathbf{x}}_{i}\right\Vert _{F}^{2}=\frac{2\sqrt{p}}{n},
\end{align*}
where the first inequality uses the norm relation \cite[page 333]{RefWorks:208},
and the last equality uses Lemma \ref{lem:proj_range}.
\end{proof}

Lemma \ref{lem:sens_cov} provides an important insight into the RON-Gauss
model. As discussed in the overview (Section \ref{subsec:Overview}),
the RON projection step of RON-Gauss serves two purposes \textendash{}
to initiate the DFM effect, and to reduce the sensitivity of the model.
The latter purpose can clearly be observed from Lemma \ref{lem:sens_cov}.
Specifically, Lemma \ref{lem:sens_cov} indicates that the $L_{1}$-sensitivity
of the main parameter of the RON-Gauss model, i.e. the covariance,
reduces as the dimension $p$ reduces. This is particularly attractive
when the original data are very high-dimensional as the noise added
by the Laplace mechanism could be greatly reduced. For example, for
the original data with 100 dimensions, the RON projection onto a 10-dimensional
subspace would reduce the sensitivity by about 3x.

With the $L_{1}$-sensitivity derived, the Laplace mechanism in Theorem
\ref{thm:laplace_matrix} can be used to derive the DP covariance
matrix: $\boldsymbol{\Sigma}^{DP}$. With $\boldsymbol{\Sigma}^{DP}$, RON-Gauss then generates
the synthetic DP data from $\mathcal{N}(\mathbf{0},\boldsymbol{\Sigma}^{DP})$.
If the mean is needed, we can readily use $\mathbf{W}^{T}\boldsymbol{\mu}^{DP}$,
which already satisfies DP due to the post-processing invariance of
DP \cite{RefWorks:185}. This completes the RON-Gauss model for unsupervised
learning. 

\begin{algorithm}
{\small \par
\textbf{Input}{: dataset $\mathbf{X}\in\mathbb{R}^{m\times n}$,
dimension $p<m$, and $\epsilon_{\mu},\epsilon_{\Sigma}>0$. \vspace{0.5em}
}{\par}
\begin{enumerate}
\item {Obtain the pre-processed data $\bar{\mathbf{X}}\in\mathbb{R}^{m\times n}$
from }\noun{Data\_Preprocessing}{ with inputs $\mathbf{X}$
and $\epsilon_{\mu}$.}{ \par}
\item {Obtain the RON-projected data $\widetilde{\mathbf{X}}\in\mathbb{R}^{p\times n}$
from }\noun{RON\_Projection}{ with inputs $\bar{\mathbf{X}}$
and $p$.}{ \par}
\item {Derive the DP covariance: $\boldsymbol{\Sigma}^{DP}=(\frac{1}{n}\widetilde{\mathbf{X}}\widetilde{\mathbf{X}}^{T})+Z$,
where $z_{j}(i)$ is drawn i.i.d. from $Lap(2\sqrt{p}/n\epsilon_{\Sigma})$.}{\par}
\item {Synthesize DP data $\mathbf{x}_{i}^{DP}\in\mathbb{R}^{p}$
by drawing samples from $\mathcal{N}(\mathbf{0},\boldsymbol{\Sigma}^{DP})$.
\vspace{-0.5em}
}{ \par}
\end{enumerate}
\textbf{Output}{: $\{\mathbf{x}_{1}^{DP},\ldots,\mathbf{x}_{n'}^{DP}\}$.
}
}
\caption{RON-Gauss for unsupervised learning \label{alg:unsupervised}}
\end{algorithm}

Algorithm \ref{alg:unsupervised} summarizes the RON-Gauss model for
unsupervised learning. First, the data are pre-processed via \noun{Data\_Preprocessing}
(Algorithm \ref{alg:Data-Preprocessing}). The DP mean derivation
in \noun{Data-Preprocessing} spends the privacy budget $\epsilon_{\mu}$.
Second, the data are projected onto a lower dimension $p$ via \noun{RON\_Projection}.
Third, the algorithm derives the DP covariance using the Laplace mechanism
with the sensitivity derived in Lemma \ref{lem:sens_cov}. Finally,
the algorithm synthesizes DP data by drawing samples from the Gaussian
generative model parametrized by the DP covariance. 
We conclude the discussion on RON-Gauss for unsupervised learning with the DP analysis of Algorithm \ref{alg:unsupervised}.
\begin{thm}
\label{thm:unsupervised_priv}Algorithm \ref{alg:unsupervised} preserves
$(\epsilon_{\mu}+\epsilon_{\Sigma})$-differential privacy. 
\end{thm}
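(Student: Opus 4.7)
The plan is to decompose Algorithm \ref{alg:unsupervised} into two privacy-consuming releases, namely the DP mean $\boldsymbol{\mu}^{DP}$ and the DP covariance $\boldsymbol{\Sigma}^{DP}$, together with a collection of privacy-free operations (the RON projection, any post-processing using the released quantities, and the final Gaussian sampling step), and then invoke sequential composition and post-processing invariance of differential privacy.

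First I would argue that step 1 satisfies $\epsilon_{\mu}$-DP. The output $\boldsymbol{\mu}^{DP}$ is generated by adding i.i.d.\ Laplace noise of scale $2\sqrt{m}/(n\epsilon_{\mu})$ to the sample mean of the pre-normalized data. Lemma \ref{lem:sens_mean} bounds the $L_{1}$-sensitivity of this sample mean by $2\sqrt{m}/n$, so Theorem \ref{thm:laplace_matrix} immediately yields $\epsilon_{\mu}$-DP for the release of $\boldsymbol{\mu}^{DP}$.

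Second I would argue that, conditioned on $\boldsymbol{\mu}^{DP}$, the release of $\boldsymbol{\Sigma}^{DP}$ in step 3 satisfies $\epsilon_{\Sigma}$-DP. Here the important observation, already flagged in the discussion preceding Lemma \ref{lem:sens_cov}, is that the centering and re-normalization are sample-wise operations using the \emph{same} $\boldsymbol{\mu}^{DP}$ on both neighboring datasets, so $\bar{\mathbf{X}}$ and $\bar{\mathbf{X}}'$ still differ in exactly one column; the RON projection preserves this property since $\mathbf{W}$ is data-independent. Lemma \ref{lem:sens_cov} therefore applies and bounds the $L_{1}$-sensitivity of the covariance estimator in Eq.\ \eqref{eq:cov_est} by $2\sqrt{p}/n$, so adding i.i.d.\ Laplace noise of scale $2\sqrt{p}/(n\epsilon_{\Sigma})$ and again invoking Theorem \ref{thm:laplace_matrix} gives $\epsilon_{\Sigma}$-DP.

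Finally, I would combine these two guarantees via sequential composition to conclude that the joint release $(\boldsymbol{\mu}^{DP},\boldsymbol{\Sigma}^{DP})$ satisfies $(\epsilon_{\mu}+\epsilon_{\Sigma})$-DP, and observe that step 4 (drawing samples from $\mathcal{N}(\mathbf{0},\boldsymbol{\Sigma}^{DP})$), as well as the possible use of $\mathbf{W}^{T}\boldsymbol{\mu}^{DP}$ for mean-shifting, are functions only of the DP outputs and of data-independent randomness. By post-processing invariance of differential privacy, the synthetic output $\{\mathbf{x}_{1}^{DP},\ldots,\mathbf{x}_{n'}^{DP}\}$ inherits the same $(\epsilon_{\mu}+\epsilon_{\Sigma})$-DP guarantee. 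The main subtlety — and the only step that is not purely mechanical — is the conditioning argument in the second part: one must be careful that $\boldsymbol{\mu}^{DP}$ is treated as a fixed realization when appealing to Lemma \ref{lem:sens_cov}, so that the neighboring relation is preserved through centering and re-normalization, which is precisely why the sensitivity bound there does not inflate when the mean is itself random.
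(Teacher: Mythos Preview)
Your proposal is correct and follows essentially the same approach as the paper: both rely on the Laplace mechanism for $\boldsymbol{\mu}^{DP}$ and $\boldsymbol{\Sigma}^{DP}$, the serial (sequential) composition theorem to combine the two budgets, and post-processing invariance to extend the guarantee to the synthesized samples. If anything, your treatment is more explicit than the paper's about the conditioning on $\boldsymbol{\mu}^{DP}$ and the preservation of the neighboring relation through centering, re-normalization, and RON projection, which the paper handles more tersely by pointing back to the preceding discussion.
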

\begin{proof}
The proof follows the following induction. The DP data $\mathbf{x}_{i}^{DP}$
are derived from only one source, i.e. the Gaussian generative model
of RON-Gauss. Based on the post-processing invariance of DP, if the
model is DP-protected, then the released data are also similarly DP-protected.
The Gaussian generative model is parametrized by $\boldsymbol{\Sigma}^{DP}$,
which is DP-protected. Specifically, the $\boldsymbol{\Sigma}^{DP}$
computation in step 3(a) spends $\epsilon_{\Sigma}$ privacy budget
with the Laplace mechanism, according to Theorem \ref{thm:laplace_matrix}.
However, the centering process in step 1(c) also spends $\epsilon_{\mu}$
privacy budget on the Laplace mechanism to derive $\boldsymbol{\mu}^{DP}$,
which assists in the $\boldsymbol{\Sigma}^{DP}$ derivation process.
Due to the serial composition theorem \cite{RefWorks:185}, the two
privacy budgets add up. Hence, $\boldsymbol{\Sigma}^{DP}$ preserves
$(\epsilon_{\mu}+\epsilon_{\Sigma})$-differential privacy, and, consequently,
the Gaussian generative model preserves $(\epsilon_{\mu}+\epsilon_{\Sigma})$-differential
privacy, so do the synthesized data.
\end{proof}

\vspace{-0.5em}
\paragraph{Supervised Learning}

The unsupervised learning does not involve the guidance from the training label. However, in supervised learning, the training label $\mathbf{y}$ is also required. Hence, the
Gaussian generative model needs to be modified to incorporate the
training label into the model in order to synthesize both DP-protected
$\widetilde{\mathbf{X}}$ and $\mathbf{y}$. 

A simple method to incorporate the training label into the Gaussian
generative model is to treat it as another feature. However, when
RON projection is applied, it should only be applied to the feature
data, but \emph{not to the training label}. This is because when the
projection is applied, each induced feature is a linear combination
of all original features. Therefore, if RON projection is also applied
to the training label, the integrity of the training label would be
spoiled. In other words, we may not be able to extract the training
label from the projected data. Thus, to preserve the integrity of
the training label, it should \emph{not} be modified by the RON projection
process.

In RON-Gauss, the aforementioned challenge in supervised learning
is navigated by augmenting the data matrix with the training label as $\mathbf{X}_{a}=\begin{bmatrix}\widetilde{\mathbf{X}}\\
\mathbf{y}^{T}
\end{bmatrix}\in\mathbb{R}^{(p+1)\times n}$. Then, the augmented covariance matrix can be written in block form
as,
\begin{equation}
\boldsymbol{\Sigma}_{a}=\frac{1}{n}\begin{bmatrix}\widetilde{\mathbf{X}}\\
\mathbf{y}^{T}
\end{bmatrix}\begin{bmatrix}\widetilde{\mathbf{X}}^{T} & \mathbf{y}\end{bmatrix}=\frac{1}{n}\begin{bmatrix}\widetilde{\mathbf{X}}\widetilde{\mathbf{X}}^{T} & \widetilde{\mathbf{X}}\mathbf{y}\\
\mathbf{y}^{T}\widetilde{\mathbf{X}}^{T} & \mathbf{y}^{T}\mathbf{y}
\end{bmatrix}.\label{eq:augmented_cov}
\end{equation}
This can then be used in a similar fashion to the covariance
matrix in Eq. \eqref{eq:cov_est} for unsupervised learning. We note
that, again, this may not be an unbiased estimate of the augmented
covariance matrix since the mean may not necessarily be zero, but,
similar to the unsupervised learning design, it has significantly
lower sensitivity than the unbiased counterpart. Therefore, we are willing
to trade the bias for achieving small sensitivity. Given the training
label with bounded value\footnote{As suggested by Liu \cite{RefWorks:372}, real-world data are often
bounded, and the bounded-valued
assumption is often made in DP analysis for multi-dimensional query
(cf. \cite{RefWorks:195,RefWorks:178,RefWorks:338,RefWorks:249}).} $\mathbf{y}\in[-a,a]^{n}$, the sensitivity of the augmented covariance matrix can be derived as follows.
\begin{lem}
\label{lem:sens_aug_cov}The $L_{1}$-sensitivity of the augmented
covariance matrix in Eq. \eqref{eq:augmented_cov} is $\frac{2\sqrt{p}+4a\sqrt{p}+a^{2}}{n}$.
\end{lem}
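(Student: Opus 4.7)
The plan is to exploit the block structure of the augmented covariance in Eq.~\eqref{eq:augmented_cov}: since the entrywise $L_1$ norm of a block matrix is the sum of the entrywise $L_1$ norms of its blocks, the sensitivity of $\boldsymbol{\Sigma}_a$ splits into contributions from the four blocks $\widetilde{\mathbf{X}}\widetilde{\mathbf{X}}^T/n$, $\widetilde{\mathbf{X}}\mathbf{y}/n$, $\mathbf{y}^T\widetilde{\mathbf{X}}^T/n$, and $\mathbf{y}^T\mathbf{y}/n$. For neighboring $\mathbf{X},\mathbf{X}'$ differing only in the $i$-th sample (and $\mathbf{y},\mathbf{y}'$ differing only in $y_i$), each of these blocks is a sum over samples in which only a single summand changes, so the supremum in the sensitivity reduces to a single-index comparison for every block.

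Next, I bound each of the four blocks separately. The top-left block is exactly what is handled in Lemma~\ref{lem:sens_cov}, giving a contribution of $2\sqrt{p}/n$. For each off-diagonal block, the single-sample difference is $y_i\widetilde{\mathbf{x}}_i - y'_i\widetilde{\mathbf{x}}'_i$; by the triangle inequality its $L_1$ norm is at most $|y_i|\|\widetilde{\mathbf{x}}_i\|_1 + |y'_i|\|\widetilde{\mathbf{x}}'_i\|_1$. I use the assumption $|y_i|,|y'_i|\le a$ together with Cauchy--Schwarz $\|\widetilde{\mathbf{x}}_i\|_1 \le \sqrt{p}\,\|\widetilde{\mathbf{x}}_i\|_F$ and Lemma~\ref{lem:proj_range} (so $\|\widetilde{\mathbf{x}}_i\|_F \le 1$), producing $2a\sqrt{p}$ per off-diagonal block and $4a\sqrt{p}$ combined, hence $4a\sqrt{p}/n$ after the $1/n$ scaling. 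For the bottom-right scalar block, the single-sample difference is $y_i^2 - y_i'^2$; the key observation is that both $y_i^2$ and $y_i'^2$ lie in $[0,a^2]$, so the tightest bound is $|y_i^2 - y_i'^2| \le a^2$, yielding $a^2/n$.

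Summing the four contributions gives $\frac{2\sqrt{p} + 4a\sqrt{p} + a^2}{n}$, matching the claim. The main subtlety to be careful about is the bottom-right block: a naive triangle-inequality bound would give $2a^2$ rather than $a^2$, so I must explicitly use that squared values lie in $[0,a^2]$ to obtain the stated constant. A minor bookkeeping point is to verify that block decomposition commutes with the entrywise $L_1$ norm and that the $\mathbf{y}^T\widetilde{\mathbf{X}}^T$ block is handled identically to $\widetilde{\mathbf{X}}\mathbf{y}$ (they are transposes of each other and share the same entrywise $L_1$ norm). No deeper difficulty arises, since all the ingredients---Lemma~\ref{lem:sens_cov}, Lemma~\ref{lem:proj_range}, and Cauchy--Schwarz---are already in place.
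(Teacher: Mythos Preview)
Your proposal is correct and follows essentially the same approach as the paper: both decompose the entrywise $L_1$ sensitivity by blocks, invoke Lemma~\ref{lem:sens_cov} for the top-left block, use the triangle inequality together with the $\|\cdot\|_1\le\sqrt{p}\,\|\cdot\|_F$ norm relation and Lemma~\ref{lem:proj_range} for the off-diagonal blocks, and bound the scalar block by $a^2$ using $y_i^2\in[0,a^2]$. Your explicit remark that the naive triangle inequality would give $2a^2$ for the scalar block, and that one must instead use the range of $y_i^2$, is exactly the subtlety the paper handles (though the paper glosses over it with an equality sign where an inequality is meant).
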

\begin{proof}
For neighboring datasets $\mathbf{X},\mathbf{X}'$, the sensitivity
is
\begin{align*}
S(\boldsymbol{\Sigma}_{a})= & \sup\frac{1}{n}\left\Vert \begin{bmatrix}\widetilde{\mathbf{X}}\widetilde{\mathbf{X}}^{T} & \widetilde{\mathbf{X}}\mathbf{y}\\
\mathbf{y}^{T}\widetilde{\mathbf{X}}^{T} & \mathbf{y}^{T}\mathbf{y}
\end{bmatrix}-\begin{bmatrix}\widetilde{\mathbf{X}'}\widetilde{\mathbf{X}'}^{T} & \widetilde{\mathbf{X}'}\mathbf{y}'\\
\mathbf{y}'^{T}\widetilde{\mathbf{X}'}^{T} & \mathbf{y}'^{T}\mathbf{y}'
\end{bmatrix}\right\Vert _{1},\\
= & \sup\frac{1}{n}(\left\Vert \widetilde{\mathbf{X}}\widetilde{\mathbf{X}}^{T}-\widetilde{\mathbf{X}'}\widetilde{\mathbf{X}'}^{T}\right\Vert _{1}+2\left\Vert \widetilde{\mathbf{X}}\mathbf{y}-\widetilde{\mathbf{X}'}\mathbf{y}'\right\Vert _{1}\\
 & +\left\Vert \mathbf{y}^{T}\mathbf{y}-\mathbf{y}'^{T}\mathbf{y}'\right\Vert _{1}).
\end{align*}
The proof then considers each summand separately. The first summand
is the sensitivity of $\boldsymbol{\Sigma}$ in Eq. \eqref{eq:cov_est},
so it is $2\sqrt{p}/n$. The last summand can be written as,
\begin{align*}
\sup\frac{\left\Vert \mathbf{y}^{T}\mathbf{y}-\mathbf{y}'^{T}\mathbf{y}'\right\Vert _{1}}{n} & =\sup\frac{\left\Vert \sum_{i=1}^{n}y(i)^{2}-\sum_{i=1}^{n}y'(i)^{2}\right\Vert _{1}}{n}\\
 & =\sup\frac{\left\Vert y(i)^{2}-y'(i)^{2}\right\Vert _{1}}{n}=\frac{a^{2}}{n},
\end{align*}
where the second equality is because only one element
in $\mathbf{y}$ and $\mathbf{y}'$ differs.

For the second summand, we have
\begin{align*}
\sup\frac{2\left\Vert \widetilde{\mathbf{X}}\mathbf{y}-\widetilde{\mathbf{X}'}\mathbf{y}'\right\Vert _{1}}{n}
 & =\sup\frac{2\left\Vert \sum_{i}\widetilde{\mathbf{x}_{i}}y(i)-\sum_{i}\widetilde{\mathbf{x}'_{i}}y'(i)\right\Vert _{1}}{n}\\
 & =\sup\frac{2\left\Vert \widetilde{\mathbf{x}_{i}}y(i)-\widetilde{\mathbf{x}'_{i}}y'(i)\right\Vert _{1}}{n}\\
 & \leq\sup\frac{2(\left\Vert \widetilde{\mathbf{x}_{i}}y(i)\right\Vert _{1}+\left\Vert \widetilde{\mathbf{x}'_{i}}y'(i)\right\Vert _{1})}{n}\\
 & \leq\sup\frac{2\sqrt{p}(\left\Vert \widetilde{\mathbf{x}_{i}}y(i)\right\Vert _{F}+\left\Vert \widetilde{\mathbf{x}'_{i}}y'(i)\right\Vert _{F})}{n}\\
 & \leq\frac{2\sqrt{p}(2a)}{n}=\frac{4a\sqrt{p}}{n},
\end{align*}
where the second line is from the fact that the other $n-1$ terms
are similar for neighboring datasets. By combing the three summands,
we have completed the proof. 
\end{proof}
As the $L_{1}$-sensitivity of $\boldsymbol{\Sigma}_{a}$ has been
derived, we can use the Laplace mechanism in Theorem \ref{thm:laplace_matrix}
to acquire the DP augmented covariance matrix: $\boldsymbol{\Sigma}_{a}^{DP}$.
Then, similar to the unsupervised learning, RON-Gauss generates the
synthetic DP data \textendash{} which include both the feature data
and the training label \textendash{} from $\mathcal{N}(\mathbf{0},\boldsymbol{\Sigma}_{a}^{DP})$.
Notice that the only difference between the RON-Gauss model for unsupervised
learning and supervised learning is the use of $\boldsymbol{\Sigma}^{DP}$
(Eq. \eqref{eq:cov_est}) and $\boldsymbol{\Sigma}_{a}^{DP}$ (Eq.
\eqref{eq:augmented_cov}), respectively.

\begin{algorithm}[t]
{\small \par
\textbf{Input}{: dataset with training labels $\mathbf{X}\in\mathbb{R}^{m\times n},\mathbf{y}\in[-a,a]^{n}$,
dimension $p<m$, and $\epsilon_{\mu},\epsilon_{\Sigma}>0$. \vspace{0.5em}
}{ \par}
\begin{enumerate}
\item {Obtain the pre-processed data $\bar{\mathbf{X}}\in\mathbb{R}^{m\times n}$
from }\noun{Data\_Preprocessing}{ with inputs $\mathbf{X}$
and $\epsilon_{\mu}$.}{ \par}
\item {Obtain the RON-projected data $\widetilde{\mathbf{X}}\in\mathbb{R}^{p\times n}$
from }\noun{RON\_Projection}{ with inputs $\bar{\mathbf{X}}$
and $p$.}{ \par}
\item Form the augmented data matrix $\mathbf{X}_{a}=\begin{bmatrix}\widetilde{\mathbf{X}}\\
\mathbf{y}^{T}
\end{bmatrix}\in\mathbb{R}^{(p+1)\times n}$.
\item {Derive the DP augmented covariance: $\boldsymbol{\Sigma}_{a}^{DP}=(\frac{1}{n}\mathbf{X}_{a}\mathbf{X}_{a}^{T})+Z^{(\Sigma)}$,
where $z_{j}^{\Sigma}(i)$ is drawn i.i.d. from $Lap((2\sqrt{p}+4a\sqrt{p}+a^{2})/n\epsilon_{\Sigma})$.}{\par}
\item {Synthesize DP augmented data $\left[\begin{array}{c}
\mathbf{x}_{i}^{DP}\\
y(i)^{DP}
\end{array}\right]\in\mathbb{R}^{p+1}$ by drawing samples from $\mathcal{N}(\mathbf{0},\boldsymbol{\Sigma}_{a}^{DP})$.
\vspace{-0.5em}
}{\par}
\end{enumerate}
\textbf{Output}{: $\{\mathbf{x}_{1}^{DP},\ldots,\mathbf{x}_{n'}^{DP}\}$
with training label $\mathbf{y}^{DP}$. }
}
\caption{RON-Gauss for supervised learning \label{alg:RON-Gauss_supervised}}
\end{algorithm}

Algorithm \ref{alg:RON-Gauss_supervised} presents the RON-Gauss model
for supervised learning. The algorithm is similar to Algorithm
\ref{alg:unsupervised}. The only difference is the use of the augmented
covariance matrix in step 4 with the sensitivity from Lemma \ref{lem:sens_aug_cov} to incorporate the training label. As a result, Algorithm
\ref{alg:RON-Gauss_supervised} can synthesize both the DP feature
data $\mathbf{x}_{i}^{DP}$ and the DP training label $\mathbf{y}^{DP}$.
Finally, we present the privacy guarantee of Algorithm \ref{alg:RON-Gauss_supervised} as follows.

\begin{thm}
\label{thm:Algorithm2_priv}Algorithm \ref{alg:RON-Gauss_supervised}
preserves $(\epsilon_{\mu}+\epsilon_{\Sigma})$-differential privacy. 
\end{thm}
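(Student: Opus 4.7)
The plan is to mirror the inductive argument used in the proof of Theorem \ref{thm:unsupervised_priv}, extending it to account for the training label $\mathbf{y}$ being absorbed into the augmented covariance rather than handled through a separate mechanism. First I would observe that the only output of Algorithm \ref{alg:RON-Gauss_supervised} is the collection of synthetic samples $\{\mathbf{x}_i^{DP}\}$ together with $\mathbf{y}^{DP}$, all drawn from a single Gaussian source $\mathcal{N}(\mathbf{0},\boldsymbol{\Sigma}_a^{DP})$. Hence, by the post-processing invariance of differential privacy, if the parameter $\boldsymbol{\Sigma}_a^{DP}$ of this generative model is $\epsilon$-DP, then so are the released data.

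The next step is to track the privacy cost accumulated in forming $\boldsymbol{\Sigma}_a^{DP}$. Two places in the algorithm consume budget. In step 1, the call to \noun{Data\_Preprocessing} invokes the Laplace mechanism on the sample mean with $L_1$-sensitivity $2\sqrt{m}/n$ (Lemma \ref{lem:sens_mean}), spending $\epsilon_{\mu}$. In step 4, the Laplace mechanism is applied to the augmented covariance with scale parameter $(2\sqrt{p}+4a\sqrt{p}+a^2)/(n\epsilon_{\Sigma})$, which by Lemma \ref{lem:sens_aug_cov} and Theorem \ref{thm:laplace_matrix} spends exactly $\epsilon_{\Sigma}$. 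It is important to note that no extra budget is needed for $\mathbf{y}$ itself: the training label is never perturbed as a standalone quantity but instead enters only as a block of $\mathbf{X}_a$, and the sensitivity bound in Lemma \ref{lem:sens_aug_cov} already accounts for the worst-case change in $\mathbf{y}$ under a neighboring dataset. Similarly, the RON projection in step 2 uses a data-independent projection matrix and therefore incurs no privacy cost.

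To combine these, I would invoke the serial (sequential) composition theorem for differential privacy: the mean release and the covariance release operate on overlapping data, so their privacy losses add, yielding a total budget of $\epsilon_{\mu}+\epsilon_{\Sigma}$ for $\boldsymbol{\Sigma}_a^{DP}$. Then post-processing invariance immediately lifts this guarantee to the synthetic output.

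The only subtle point, and the main place where care is required, is ensuring that neighboring input datasets $(\mathbf{X},\mathbf{y})$ and $(\mathbf{X}',\mathbf{y}')$ remain neighboring after the pre-processing and projection pipeline feeding into step 4. This has already been verified earlier in the excerpt: the pre-normalization, mean-shift by the same $\boldsymbol{\mu}^{DP}$, re-normalization, and sample-wise RON projection all act coordinate-wise (sample-wise), so $\widetilde{\mathbf{X}}$ and $\widetilde{\mathbf{X}'}$ still differ in exactly one column, which is precisely the assumption under which Lemma \ref{lem:sens_aug_cov} was derived. Once this observation is reiterated, the composition argument closes the proof cleanly.
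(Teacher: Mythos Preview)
Your proposal is correct and follows essentially the same approach as the paper: the paper's own proof is a one-line remark that the argument mirrors Theorem~\ref{thm:unsupervised_priv} with Lemma~\ref{lem:sens_aug_cov} substituted for the covariance sensitivity, and your write-up spells out exactly that mirroring in full detail.
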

\begin{proof}
The proof mirrors that of Theorem \ref{thm:unsupervised_priv} but
uses the sensitivity of the augmented covariance in Lemma \ref{lem:sens_aug_cov}
instead.
\end{proof}

\vspace{-1em}
\paragraph{Extension to Gaussian Mixture Model}
\label{sec:gmm}

Algorithm \ref{alg:RON-Gauss_supervised} for supervised learning
uses the unimodal Gaussian generative model. \chang{The labels synthesized from this algorithm are numerical. In many applications, e.g. regression, this may already be effective. However, in some applications, e.g. \emph{classification}, it is desirable to synthesize the labels that are discrete or categorical.} To this end, we extend RON-Gauss to a multi-modal Gaussian generative model
using the \emph{Gaussian Mixture Model} (GMM) \cite{RefWorks:51,RefWorks:225}.
Conceptually, each mode of GMM can be used to capture the distribution
of the data in each class. Thus, the entire dataset is modeled by
the mixture of these modes. In fact, many classifiers such as Linear
Discriminant Analysis (LDA) \cite{RefWorks:292}, Bayes Net \cite{RefWorks:51},
and mixture of Gaussians \cite{RefWorks:51} also utilize this type
of generative model, so this GMM extension has historically been shown
to be effective for classification.

\begin{algorithm}
{\small \par
\textbf{Input}{: dataset $\mathbf{X}\in\mathbb{R}^{m\times n},\mathbf{y}\in\{c_{1},c_{2},\ldots,c_{L}\}^{n}$
, dimension $p<m$, and $\epsilon_{\boldsymbol{\mu}},\epsilon_{\boldsymbol{\Sigma}}>0$.\vspace{0.5em}
}{\par}
\begin{enumerate}
\item \textbf{for} {$c$ in $\{c_{1},\ldots,c_{L}\}$ }\textbf{\small{}do:}{\small{}\vspace{0.5em}
}{\par}
\begin{enumerate}
\item {Form $\mathbf{X}_{c}$, whose $n_{c}$ column vectors are
all samples in class $c$.}{\par}
\item {Obtain the pre-processed data $\bar{\mathbf{X}}_{c}\in\mathbb{R}^{m\times n_{c}}$
and the DP class-mean $\boldsymbol{\mu}_{c}^{DP}$ from }\noun{Data\_Preprocessing}{
with inputs $(\mathbf{X}_{c},\epsilon_{\mu})$.}{ \par}
\item {Obtain the RON-projected data $\widetilde{\mathbf{X}}_{c}\in\mathbb{R}^{p\times n_{c}}$
and the projection matrix $\mathbf{W}$ from }\noun{\small{}RON\_Projection}{
with inputs $\bar{\mathbf{X}}_{c}$ and $p$.}{ \par}
\item {Derive the DP class-covariance: $\boldsymbol{\Sigma}_{c}^{DP}=(\frac{1}{n_{c}}\widetilde{\mathbf{X}}_{c}\widetilde{\mathbf{X}}_{c}^{T})+Z$,
where $z_{j}(i)$ is drawn i.i.d. from $Lap(2\sqrt{p}/n_{c}\epsilon_{\Sigma})$.}{\par}
\item {Synthesize DP class-$c$ data $\mathbf{X}_{c}^{DP}$ by drawing
samples from $\mathcal{N}(\mathbf{W}^{T}\boldsymbol{\mu}_{c}^{DP},\boldsymbol{\Sigma}_{c}^{DP})$,
and assign $\mathbf{y}_{c}^{DP}=c$ for all samples in $\mathbf{X}_{c}^{DP}$.

}{ \par}
\end{enumerate}
\item {Let $\mathbf{X}^{DP}=[\mathbf{X}_{c_{1}}^{DP},\ldots,\mathbf{X}_{c_{L}}^{DP}]$.}{ \par}

\item {Let $\mathbf{y}^{DP}=[\mathbf{y}_{c_{1}}^{DP^{T}},\ldots,\mathbf{y}_{c_{L}}^{DP^{T}}]^{T}$.\vspace{-0.5em}
}{ \par}
\end{enumerate}
\textbf{Output}{: $\{\mathbf{X}^{DP},\mathbf{y}^{DP}\}$.
}
}
\caption{RON-Gauss' extension to GMM\label{alg:RON-Gauss-for-gmm}}
\end{algorithm}

In classification, the training label is categorical, i.e. $y\in\{c_{1},\ldots,c_{L}\}$
for $L$-class classification. Algorithm \ref{alg:RON-Gauss-for-gmm}
presents an extension of RON-Gauss to GMM. The algorithm iterates
through the data samples in each class. It derives DP samples for each class in
a similar procedure to Algorithm {\small{}\ref{alg:unsupervised} }with
one difference. For GMM, the data in each class are generated from
the Gaussian generative model with the mean equal to the RON-projected
DP class-mean, i.e. $\mathbf{W}^{T}\boldsymbol{\mu}_{c}^{DP}$. This
is to capture the multi-modal nature of GMM. Since every DP sample
drawn from each iteration of step 1 belongs to the same class, the
same training label is assigned for every synthesized sample. Finally, after iterating
through all classes, the algorithm stacks the DP samples and training
labels together before releasing the synthesized data.

We note that this algorithm assumes that each data sample belongs to
one class only, so each mode of Gaussian is derived from a disjoint
set of data. This is the common setting in supervised learning applications (cf. \cite{RefWorks:51,RefWorks:33,RefWorks:225,RefWorks:376}).
In addition, to comply with the bounded DP notion we adopt throughout
(cf. Remark \ref{rem:neighboring_notion}), the algorithm assumes
that the number of samples in each class $n_{c}$ is public information.
Finally, we present the DP analysis of Algorithm \ref{alg:RON-Gauss-for-gmm}
as follows.
\begin{thm}
Algorithm \ref{alg:RON-Gauss-for-gmm} preserves $(\epsilon_{\boldsymbol{\mu}}+\epsilon_{\boldsymbol{\Sigma}})$-differential
privacy. 
\end{thm}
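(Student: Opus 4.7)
The plan is to combine two standard composition properties of differential privacy: serial composition inside a single iteration of step 1, and parallel composition across the $L$ class iterations. Within one pass on class $c$, the structure of steps 1(a)--1(d) mirrors Algorithm \ref{alg:unsupervised}, so the argument used for Theorem \ref{thm:unsupervised_priv} carries over almost verbatim: the DP class-mean $\boldsymbol{\mu}_c^{DP}$ produced inside \noun{Data\_Preprocessing} spends $\epsilon_\mu$ via the Laplace mechanism (Lemma \ref{lem:sens_mean} with $n$ replaced by $n_c$), and the DP class-covariance $\boldsymbol{\Sigma}_c^{DP}$ spends $\epsilon_\Sigma$ via the Laplace mechanism (Lemma \ref{lem:sens_cov} with the same substitution). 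Because the centering step feeds $\boldsymbol{\mu}_c^{DP}$ into the covariance computation, serial composition combines the two releases into a joint $(\epsilon_\mu + \epsilon_\Sigma)$-DP guarantee for $(\boldsymbol{\mu}_c^{DP}, \boldsymbol{\Sigma}_c^{DP})$. The class-$c$ synthesized samples $\mathbf{X}_c^{DP}$ and their labels $\mathbf{y}_c^{DP}$ are deterministic functions of these DP parameters together with independent Gaussian randomness, and hence inherit the same guarantee by the post-processing invariance of DP.

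The next step is to lift this per-class guarantee to the full algorithm via parallel composition. Under the bounded DP notion adopted throughout (Remark \ref{rem:neighboring_notion}) and the stated assumption that each $n_c$ is public, any two neighboring datasets $\mathbf{X}$ and $\mathbf{X}'$ differ only in a single record that is assigned to one specific class $c^\star$; all sub-datasets $\mathbf{X}_c$ for $c \neq c^\star$ are then identical across $\mathbf{X}$ and $\mathbf{X}'$, and their corresponding iterations of step 1 produce identically distributed outputs. Only the iteration $c = c^\star$ contributes to the privacy cost, which is the $(\epsilon_\mu + \epsilon_\Sigma)$ established above. The concatenation in steps 2 and 3 is a deterministic operation on the per-class outputs and does not consume additional budget.

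The main (and essentially only) subtlety is to justify parallel composition rigorously: one must state explicitly that the class partition is fixed by the public label information (so that the differing record between $\mathbf{X}$ and $\mathbf{X}'$ lies in exactly one $\mathbf{X}_{c^\star}$), and that each iteration's Laplace noise is drawn independently and depends only on class-local quantities. Once these preconditions are in place, the proof reduces to invoking Theorem \ref{thm:unsupervised_priv} on iteration $c^\star$ and parallel composition across $c$, yielding $(\epsilon_\mu + \epsilon_\Sigma)$-DP for the overall output $\{\mathbf{X}^{DP}, \mathbf{y}^{DP}\}$.
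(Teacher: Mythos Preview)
Your proposal is correct and follows essentially the same approach as the paper's own proof: invoke parallel composition across the disjoint class partitions and reduce the per-class analysis to Theorem \ref{thm:unsupervised_priv}. Your write-up is more explicit than the paper's two-sentence version, in particular in justifying why the differing record lies in a single $\mathbf{X}_{c^\star}$ under the bounded notion with public $n_c$, but the underlying argument is identical.
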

\begin{proof}
Since the data partition is disjoint, and each class has the same domain, the privacy
budget used for each class does not add up from the parallel composition \cite{RefWorks:477,mcsherry2009privacy}. The proof then follows from
that of Theorem \ref{thm:unsupervised_priv}.
\end{proof}

\vspace{-1.5em}
\section{Experiments\label{sec:Experiments}}

We demonstrate that RON-Gauss is effective across a range of datasets
and machine learning tasks via three experiments. For the datasets,
we use a facial expression dataset~\cite{RefWorks:230}, a sensor
displacement dataset~\cite{RefWorks:207,RefWorks:358}, and a social
media dataset~\cite{RefWorks:240}. For the machine learning tasks,
we use the clustering, classification, and regression applications. In non-interactive DP, the aim is to release DP data such that the utility of the DP data closely resembles that of the original data. Hence, to evaluate the quality of the non-interactive DP algorithms based on the utility measure commonly used for the respective task (cf. Section \ref{subsec:Setups}). DP data with high quality, therefore, should provide the values of the utility measure close to that obtained from the non-private data, and our experiments show that RON-Gauss can achieve this objective. \chang{We note that we choose task-centric utility measures for our experiments since we want to evaluate the approach based on how much insight can be gained from the synthesized data with respect to each task. However, in other settings, task-independent evaluation metrics such as reconstruction error or mutual information could also be appropriate.} We also compare our work to four previous approaches that \emph{solely}
relied on either DR, or generative models. Table
\ref{tab:exp_setup_summary} summarizes the experimental setups, and
we discuss them in detail as follows.

\begin{table*}
\begin{centering}
\begin{tabular}{>{\centering}p{2.2cm}|>{\centering}p{2.5cm}|>{\centering}p{2cm}|>{\centering}p{2cm}|>{\centering}p{2.8cm}|c|>{\centering}p{1.5cm}}
\hline 
{Exp. } & {Dataset } & {Training Size } & {Feature Size } & {Metric } & {ML Alg. } & {DP Alg. }\tabularnewline
\hline 
\hline 
{Clustering } & {GFE \cite{RefWorks:230}} & {27,936 } & {301 } & {S.C. ($\uparrow$better) } & {K-Means } & {Alg. \ref{alg:unsupervised} }\tabularnewline
\hline 
{Classification } & {Realdisp \cite{RefWorks:207,RefWorks:358}} & {216,752 } & {117 } & {Accuracy ($\uparrow$better) } & {SVM } & {Alg. \ref{alg:RON-Gauss-for-gmm} }\tabularnewline
\hline 
{Regression } & {Twitter \cite{RefWorks:240}} & {573,820 } & {77 } & {RMSE ($\downarrow$better) } & {KRR } & {Alg. \ref{alg:RON-Gauss_supervised} }\tabularnewline
\hline 
\end{tabular}
\par\end{centering}
\caption{Summary of the experimental setups of the three experiments. }\label{tab:exp_setup_summary}
\vspace{-1em}
\end{table*}

\vspace{-1em}
\subsection{Datasets}
\vspace{-1em}
\subsubsection{Grammatical Facial Expression (GFE)}
\vspace{-1em}
This dataset is based on facial expression analysis from
video images under Libras
(a Brazilian sign language), and has 27,936 samples and 301 features \cite{RefWorks:230}. There are multiple
clusters based on different grammatical expressions. \chang{The image features are designed to be informative of the facial expressions. However, the same features may be used to infer the individuals whose images are included in the dataset. Hence, it is desirable to release a DP-protected dataset.} We use this dataset for the privacy-preserving
clustering study on Algorithm \ref{alg:unsupervised}.

\vspace{-1.5em}
\subsubsection{Realistic Sensor Displacement (Realdisp)} 
\vspace{-1em}
This is a mobile-sensing dataset used for activity recognition \cite{RefWorks:207,RefWorks:358}. The
features include readings of various motion sensors, and the goal
is to identify the activity being performed. \chang{However, the same features can possibly be used to identify the individuals whose data are in the dataset. Therefore, it is desirable to release a DP-protected dataset.} The dataset consists
of 216,752 training samples, and 1,290 testing samples with 117 features.
In our experiments, we use this dataset for the privacy-preserving
classification study with Algorithm \ref{alg:RON-Gauss-for-gmm}.
Specifically, we formulate it as a binary classification \textendash{}
identifying whether the subject is performing an action that causes
a location displacement or not, e.g. walking, running, cycling, etc.

\vspace{-1.5em}
\subsubsection{Buzz in Social Media (Twitter)}
\vspace{-1em}
This dataset extracts 77 features from Twitter posts, which are used
to predict the popularity level of the topic represented as a real
value in $[-1,1]$ \cite{RefWorks:240}. \chang{However, these features may also be used to infer the owner of each tweet; thus it is desirable to instead release the DP-protected dataset.} The dataset is divided into the training set of
573,820 samples, and the testing set of 4,715 samples. We use this
dataset for privacy-preserving regression, and adopt Algorithm \ref{alg:RON-Gauss_supervised}
for the experiments.

\vspace{-1em}
\subsection{Setups \label{subsec:Setups}}

Since RON-Gauss algorithms require $\epsilon_{\mu}$ and $\epsilon_{\Sigma}$
for the mean and the covariance, respectively, given a fixed total
privacy budget of $\epsilon$, we allocate the budget as:
$\epsilon_{\mu}=0.3\epsilon$ and $\epsilon_{\Sigma}=0.7\epsilon$.
The rationale is that the covariance is the more critical
parameter in our algorithms, and usually has higher complexity than
the mean ($\mathbb{R}^{p\times p}$ vs $\mathbb{R}^{m}$). For all
experiments, we perform 100 trials and report the average with
the 95\% confidence interval.

\vspace{-1.3em}
\subsubsection{Clustering Setup}
\vspace{-1em}
Clustering is unsupervised learning, so we apply Algorithm
\ref{alg:unsupervised}. \chang{We use K-means \cite{RefWorks:33}
as the clustering method for its simplicity and efficiency, even for large datasets, and use the \emph{Silhouette Coefficient
(S.C.)} \cite{RefWorks:232} as the metric for evaluation. The number of clusters in K-means is set using the Silhouette analysis method \cite{RefWorks:232,silhouette_analysis}.} S.C. is defined as follows. For the sample
$\mathbf{x}_{i}$ assigned to class $y(i)$, 
\begin{itemize}
\item let $a(i)$ be the average distance between $\mathbf{x}_{i}$ and
all other samples assigned to the same class $y(i)$; 
\item let $b(i)$ be the average distance between $\mathbf{x}_{i}$ and
all points assigned to the next nearest class.
\end{itemize}
\vspace{-0.5em}
Let $sc(i)=\frac{b(i)-a(i)}{\max\{b(i),a(i)\}}$, and S.C. is defined
as: 
\[
S.C.=\frac{1}{n}\sum_{i=1}^{n}sc(i).
\]
Intuitively, S.C. measures the average distance between the sample
and its class mean, normalized by the distance to the next nearest
class mean. Its range is $[-1,1]$, where higher value indicates
better the performance.

We pick this metric for two reasons. First, as opposed to other
metrics including ACC \cite{RefWorks:373}, ARI \cite{RefWorks:374},
or V-measure \cite{RefWorks:375}, S.C. does not require the knowledge
of the ground truth. This is vital both for our evaluation and for
real-world applications, respectively because the ground truth is
not available for the synthetic data in our evaluation, and it is
often not available in practice, too. Second, as suggested by Rousseeuw
\cite{RefWorks:232}, S.C. depends primarily on the distribution of
the data, but less on the clustering algorithm used, so it is fitting
for the evaluation of non-interactive private data release.

 \vspace{-1em}
\subsubsection{Classification Setup} \label{subsubsec:clf_setup}
 \vspace{-1em}
For classification, we employ the GMM according to Algorithm \ref{alg:RON-Gauss-for-gmm},
and use the support vector machine (SVM) \cite{RefWorks:359,RefWorks:231}
as the classifier in all experiments. \chang{SVM is chosen since it has been shown to perform well on binary classification \cite{RefWorks:199,RefWorks:430,byun2002applications}, and it has been proven \textendash{} both empirically and theoretically \textendash{} to generalize well \cite{RefWorks:376,RefWorks:199}. The evaluation metric is the traditional classification accuracy.
}

\chang{Since we consider the original training data as sensitive, we apply RON-Gauss to generate DP data that are used to train machine learning models. However, we test the machine learning models on the real test data in order to evaluate the ability of the DP training data to capture the classification pattern of the real data.}

\vspace{-1em}
\subsubsection{Regression Setup}
\vspace{-1em}
We use Algorithm \ref{alg:RON-Gauss_supervised} for regression, and
use kernel ridge regression (KRR) \cite{RefWorks:33,RefWorks:231}
as the regressor due to its large hypothesis class with proven theoretical error bound \cite{RefWorks:199,zhang2005learning}. The evaluation metric is the root-mean-square
error (RMSE) \cite{RefWorks:33,RefWorks:225}. Finally, for comparison, we also provide a random-guess baseline of which the prediction is drawn i.i.d. from a uniform distribution.
\chang{Finally, we manage the train/test split in a similar fashion to the above classification setup (Section \ref{subsubsec:clf_setup}).}

\vspace{-0.8em}
\subsubsection{Comparison to Other Methods}
\vspace{-0.7em}
To provide context to the experimental results, we compare our approach
to four previous works and a non-private baseline method as follows. 
\begin{enumerate}
\item Real data: the non-private baseline approach, where the result is
obtained from the original data without any modification. 
\item Li et al. \cite{RefWorks:337}: the method based on dimensionality
reduction via Bernoulli random projection on the identity query. 
\item Jiang et al. \cite{RefWorks:339}: the method based on PCA on the
identity query.
\item Blum et al. \cite{RefWorks:174}: exponential mechanism for non-interactive
setting. 
\item Liu \cite{RefWorks:372}: parametric generative model without DR. 
\end{enumerate}
We compare RON-Gauss to these five methods for the following reasons.
The first comparison is to show the real-world usability of RON-Gauss.
The second and third comparisons are to motivate the use of the Gaussian
generative model over the identity query, and the remaining comparisons
are to motivate DR via the RON projection. For all previous methods, we
use the parameters suggested by the respective authors, and we vary
the hyper-parameter before reporting the best result.

\vspace{-1em}
\subsection{Experimental Results}
\vspace{-0.5em}
For methods with DR, the results reported are the best results among 
varied dimensions.\footnote{As discussed by Chaudhuri et al. \cite{RefWorks:188}, in the real-world
deployment, the parameter tuning process must be private as well.}

\begin{table}
\begin{centering}
\begin{tabular}{>{\centering}p{1.7cm}>{\centering}p{1.1cm}>{\centering}p{1cm}>{\centering}p{0.25cm}>{\centering}p{1.3cm}>{\centering}p{0.9cm}}
\toprule 
{Method }  & {Model }  & {DR}  & {$\epsilon$}  & {S.C.}  & {$\Delta$S.C.}\tabularnewline
\midrule
\midrule 
{Real data }  & {$-$ }  & {$-$ }  & {$-$ }  & {$.286$}  & {$.00$}\tabularnewline
\midrule 
{Li et al. \cite{RefWorks:337} }  & {Identity }  & {Bern. Rand. }  & {1.}  & {$.123\pm.000$}  & {$.16$}\tabularnewline
\midrule 
{Jiang et al. \cite{RefWorks:339} }  & {Identity }  & {PCA }  & {1.}  & {$.123\pm.000$}  & {$.16$}\tabularnewline
\midrule 
{Blum et al. \cite{RefWorks:174} }  & {Exp. Mech. }  & {$-$ }  & {1.}  & {$.026\pm.017$}  & {$.26$}\tabularnewline
\midrule 
{Liu \cite{RefWorks:372} }  & {Gaussian }  & {$-$ }  & {1.}  & {$.092\pm.001$}  & {$.19$}\tabularnewline
\midrule 
{RON-Gauss }  & {Gaussian }  & {RON }  & {1.}  & {$.274\pm.015$}  & {$.01$}\tabularnewline
\bottomrule
\end{tabular}
\par\end{centering}
\caption{Clustering results (GFE dataset). $\Delta$S.C.
indicates the error relative to the performance by real data. \label{tab:Clustering-results}}
\vspace{-1em}
\end{table}

\vspace{-1.5em}
\subsubsection{Privacy-Preserving Clustering}
\vspace{-1em}
Table \ref{tab:Clustering-results} summarizes the results for clustering,
and the following are main observations. 
\begin{itemize}
\item Compared to the non-private baseline (real data), RON-Gauss has almost
identical performance with only 0.01 additional error (4\% error)
while preserving strong privacy ($\epsilon=1.0$). 
\item Compared to Li et al. \cite{RefWorks:337} and Jiang et al. \cite{RefWorks:339},
who use the identity query as opposed to the Gaussian generative model,
RON-Gauss has over 2x better utility with the same privacy budget. 
\item Compared to Blum et al. \cite{RefWorks:174} and Liu \cite{RefWorks:372},
who do not use DR, RON-Gauss has over 10x and 3x better utility with
the same privacy budget. 
\end{itemize}

\chang{For RON-Gauss, the optimal number of clusters based on the Silhouette analysis is four. It is  interesting to note that RON-Gauss achieves good results despite using the unimodal Gaussian model. This can partially be explained by the curse of dimensionality
~\cite{RefWorks:350,beyer1999nearest,koppen2000curse,donoho2000high}. One consequence of the curse of dimensionality is the concentration of the data mass near the surface of the hypercube encapsulating the data domain space. With respect to our results, this leads to the observation that despite using the unimodal Gaussian model the data generated by RON-Gauss can form different clusters around different parts of the hypercube surface. 
Thus, if this unimodal distribution can represent the original data well according to the DFM effect, it can provide clustering performance close to that of the original data. 
}

\vspace{-2em}
\subsubsection{Privacy-Preserving Classification}
\vspace{-1em}
Table \ref{tab:The-classification-results} summarizes the classification
results. The following are main observations. 
\begin{itemize}
\item Compared to the non-private baseline (real data), RON-Gauss has almost
identical performance with 2.45\% additional error, while preserving
strong privacy ($\epsilon=1.0$). 
\item Compared to Li et al. \cite{RefWorks:337} and Jiang et al. \cite{RefWorks:339},
who use the identity query as opposed to GMM, RON-Gauss has over 20\%
and 30\% better utility, respectively, with the same privacy budget. 
\item Compared to Blum et al. \cite{RefWorks:174} and Liu \cite{RefWorks:372},
who do not use DR, RON-Gauss has over 35\% and 25\% better utility,
respectively, with the same privacy. 
\end{itemize}
\vspace{-2em}

\begin{table}
\begin{centering}
\begin{tabular}{>{\centering}p{1.95cm}>{\centering}p{1.3cm}>{\centering}p{1.2cm}cc}
\toprule 
{Method }  & {Model }  & {DR}  & {$\epsilon$}  & {Accuracy (\%)}\tabularnewline
\midrule
\midrule 
{Real data }  & {- }  & {- }  & {- }  & {$89.61$}\tabularnewline
\midrule 
{Li et al. \cite{RefWorks:337} }  & {Identity }  & {Bern. Rand. }  & {1. }  & {$65.04\pm0.90$}\tabularnewline
\midrule 
{Jiang et al. \cite{RefWorks:339} }  & {Identity }  & {PCA} & {1. }  & {$54.51\pm1.65$}\tabularnewline
\midrule 
{Blum et al. \cite{RefWorks:174} }  & {Exp. Mech. }  & {- }  & {1. }  & {$51.24\pm1.83$}\tabularnewline
\midrule 
{Liu \cite{RefWorks:372} }  & {GMM }  & {- }  & {1. }  & {$61.31\pm0.65$}\tabularnewline
\midrule 
{RON-Gauss}  & {GMM }  & {RON}  & {1.}  & {$87.16\pm0.27$}\tabularnewline
\bottomrule
\end{tabular}
\par\end{centering}
\caption{Classification results (Realdisp dataset).\label{tab:The-classification-results}}
\vspace{-1em}
\end{table}

\vspace{-0.7em}
\subsubsection{Privacy-Preserving Regression}
\vspace{-1em}
Table \ref{tab:Regression-results} summarizes the results for regression.
The following are main observations. 
\begin{itemize}
\item Compared to the non-private baseline (real data), RON-Gauss actually
performs statistically equally well, while preserving strong privacy
($\epsilon=1.0$). 
\item Compared to Li et al. \cite{RefWorks:337} and Jiang et al. \cite{RefWorks:339},
who use the identity query as opposed to the Gaussian generative model,
RON-Gauss has over 3x better utility with the same privacy budget. 
\item Compared to Blum et al. \cite{RefWorks:174} and Liu \cite{RefWorks:372},
who do not use DR, RON-Gauss has over 3x and 5x better utility with
the same privacy budget.
\end{itemize}

\vspace{-2em}
\subsubsection{Summary of Experimental Results}
\vspace{-1em}
RON-Gauss outperforms all four other methods in terms of
utility across all three learning tasks. RON-Gauss also performs comparably
well relative to the maximum utility achieved by the non-private baseline
in all tasks. The main results are concluded as follows. 
\begin{itemize}
\item RON-Gauss provides performance close to that attainable from the non-private
real data. 
\item Using the Gaussian generative model over the identity query has been
shown to provide the utility gain of up to 2x, 30\%, and 3x for clustering,
classification, and regression, respectively. 
\item Using RON to reduce dimension of the data has been shown to provide
the utility gain of up to 10x, 35\%, and 5x, for clustering, classification,
and regression, respectively. 
\end{itemize}

\vspace{-2em}

\begin{table}
\begin{centering}
\begin{tabular}{>{\centering}p{1.95cm}>{\centering}p{1.3cm}>{\centering}p{1cm}cc}
\toprule 
{Method }  & {Model }  & {DR} & {$\epsilon$}  & {RMSE ($\times10^{-2})$}\tabularnewline
\midrule
\midrule 
{Real data }  & {- }  & {- }  & {- }  & {$0.21$}\tabularnewline
\midrule 
{Li et al. \cite{RefWorks:337} }  & {Identity }  & {Bern. Rand. }  & {1. }  & {$0.68\pm0.01$}\tabularnewline
\midrule 
{Jiang et al. \cite{RefWorks:339} }  & {Identity }  & {PCA }  & {1. }  & {$0.68\pm0.00$}\tabularnewline
\midrule 
{Blum et al. \cite{RefWorks:174} }  & {Exp. Mech. }  & {- }  & {1. }  & {$0.62\pm0.07$}\tabularnewline
\midrule 
{Liu \cite{RefWorks:372} }  & {Gaussian}  & {- }  & {1.}  & {$1.00\pm0.12$}\tabularnewline
\midrule 
{RON-Gauss }  & {Gaussian}  & {RON} & {1. }  & {$0.21\pm0.01$}\tabularnewline
\bottomrule
\end{tabular}
\par\end{centering}
\caption{Regression results (Twitter dataset). RMSE is an error metric, so
lower values indicate better utility. (note: RMSE of random guess
is $\sim57.20\times10^{-2}$).\label{tab:Regression-results}}
\vspace{-1em}
\end{table}

\section{Discussion} \label{sec:discussion}

\subsection{Effect of Dimension on the Utility}

\begin{figure}
\begin{centering}
\includegraphics[scale=0.5]{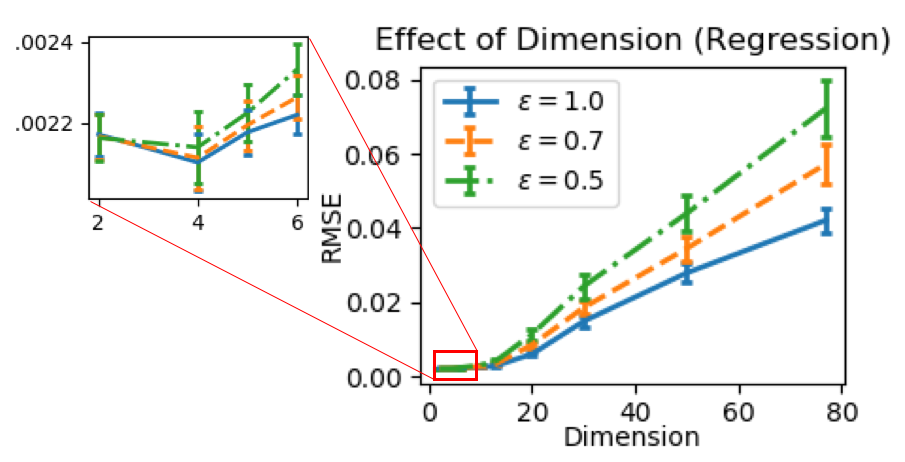} 
\par\end{centering}
\caption{Effects of dimension on the regression performance on Twitter dataset.
\label{fig:Effects-of-dimension}}
\end{figure}

In Section \ref{sec:RON-Gauss}, we discuss how RON projection
can reduce the level of noise required for DP. This effect can be
observed experimentally as illustrated by Figure \ref{fig:Effects-of-dimension},
which shows the relationship between the dimension the data are reduced
to and the utility performance. Noticeably, there is a gain in utility
as the dimension is reduced. Specifically, the peak performance is
achieved at 4 dimensions in this case. This general trend is consistent
across different privacy budgets. Seeking the optimal dimension a
priori is an interesting topic for future research on the RON-projection-based
methods.

\vspace{-1em}
\subsection{RON-Gauss Against Membership Inference Attack}

\begin{figure}
\begin{centering}
\includegraphics[scale=0.55]{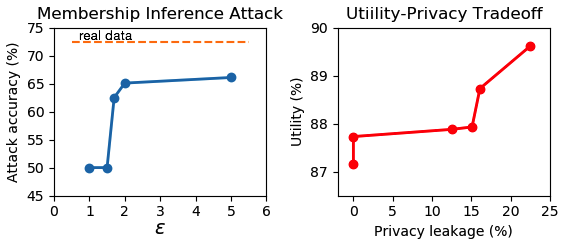} 
\par\end{centering}
\caption{\chang{Membership inference attack on RON-Gauss using Realdisp dataset. (Left) The attack accuracy against different values of $\epsilon$. The dashed line shows the attack accuracy on the real data for comparison. (Right) The tradeoff between the utility (classification accuracy) and the privacy leakage (the difference between the membership inference accuracy and random guess at 50\%)}.
\label{fig:inference-attack}}
\end{figure}

\chang{Recent works have suggested using inference attacks to measure the susceptibility of the released data and identify the appropriate values of $\epsilon$ for non-interactive differential privacy, e.g. \cite{RefWorks:542,balu2014challenging}. To evaluate RON-Gauss against inference attacks, we implement the membership inference attack proposed by Shokri et al. \cite{RefWorks:542} using their published software \cite{membership-att}. This attack trains shadow machine learning models and an attack model to identify whether a given sample is in the dataset. Since their attack is designed for a classification task, we evaluate it on our classification experimental setup using Realdisp data. For the attack, we use ten shadow models and use neural network for the attack model with 0.01 learning rate trained on 50 epochs. These are the default parameter values of the software used \cite{membership-att}.}

\chang{The results are shown in Figure \ref{fig:inference-attack}. The test set is chosen such that a random guess on the membership inference attack would yield an accuracy of 50\%. Figure \ref{fig:inference-attack} (Left) suggests that $\epsilon$ values of 1.5 or less are appropriate for this setting since the performance of the membership inference attack is close to a random guess. Figure \ref{fig:inference-attack} (Right) illustrates the utility-privacy tradeoff based on this attack. The privacy leakage is defined as the attack accuracy above the random guess level. In other words, it measures how much the attack performs better than a random guess. The utility is measured by the classification accuracy similar to our classification experiments in Section \ref{sec:Experiments}. The plot allows the practitioners to choose an $\epsilon$ value that meets their utility-privacy tradeoff. For example, if we require the privacy leakage to be less than 10\%, the curve in Figure \ref{fig:inference-attack} (Right) suggests that we can achieve almost 88\% utility.
}

\chang{We note that the membership inference attack of Shokri et al.~\cite{RefWorks:542} may not be the optimal inference attack against RON-Gauss, since it is a general attack method not specifically tailored for our approach. We leave the analysis of more advanced attacks that specifically utilize knowledge of the RON-Gauss mechanism to future work,   e.g. using hypothesis testing~\cite{balu2014challenging}.}

\vspace{-1em}
\subsection{RON-Gauss as a Generative Model}

Our work uses a parametric generative model to capture the essence of the unknown data distribution. Since RON-Gauss involves DR as an important step, the RON-Gauss model is inevitably lossy, i.e. there is information loss due to the use of the model itself. However, this loss is mitigated partly by the DFM effect, which ensures that the data are close to Gaussian after the RON projection. To illustrate the effectiveness of this effect and of RON-Gauss as a parametric generative model, we test RON-Gauss purely for its quality as a generative model, i.e. without the DP component, on the MNIST dataset \cite{RefWorks:541,lecun-mnist}. 
Since MNIST is typically used for classification, we use Algorithm \ref{alg:RON-Gauss-for-gmm} for RON-Gauss and set $\epsilon\rightarrow\infty$ to leave out the effect of DP noise. We project the data onto 392 dimensions \textendash{} half of the original dimensions of 784 \textendash{} and synthesize the samples, which are then reconstructed into the synthesized images. Examples of the synthesized images are shown in Figure \ref{fig:mnist_ex}. These images show good digit visibility, which indicates the potential of RON-Gauss as a generative model. 

However, admittedly, the visibility of the digits subsides gradually as we project the data onto lower dimensions. Particularly, we observe that, at dimensions lower than 100, the digits are not very visible anymore. This depicts that, despite its promise, RON-Gauss may not yet be the universal model for every situation since there remains the need to balance the information loss due to DR. However, if sufficient information is retained, RON-Gauss has shown the potential to be a quality model by utilizing the DFM effect,as demonstrated by our experiments in Section \ref{sec:Experiments}.

\begin{figure}
\begin{centering}
\includegraphics[scale=0.45]{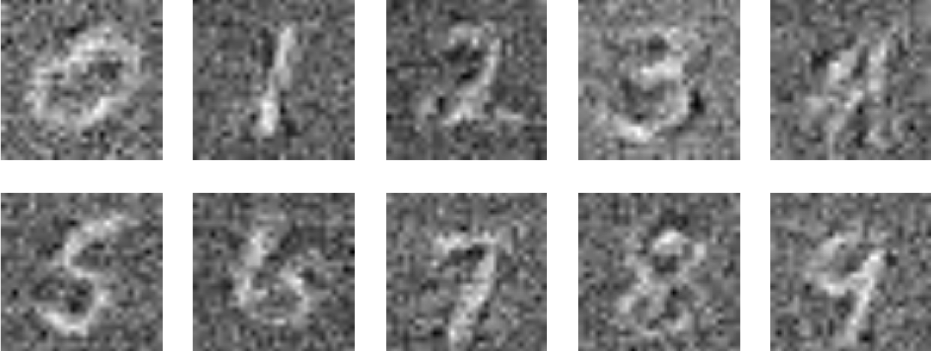} 
\par\end{centering}
\caption{Synthesized MNIST images from the RON-Gauss model without the DP component using half of the full dimensions.
\label{fig:mnist_ex}
\vspace{-0.5em}}
\end{figure}

\vspace{-1em}
\subsection{The Design of \texorpdfstring{$\epsilon_{\mu}$}{eps\_mu} and \texorpdfstring{$\epsilon_{\Sigma}$}{eps\_Sigma} for RON-Gauss Algorithms}

RON-Gauss Algorithms take as inputs two privacy parameters: $\epsilon_{\mu}$ and $\epsilon_{\Sigma}$. The algorithms are then shown to preserve $(\epsilon_{\mu}+\epsilon_{\Sigma})$-differential privacy. This means that, for a fixed total privacy budget of $\epsilon=\epsilon_{\mu}+\epsilon_{\Sigma}$, we can choose how much to allocate to $\epsilon_{\mu}$ and $\epsilon_{\Sigma}$ in order to maximize the utility of the synthesized data. In our experiments, we fix the ratio between the two based on the observation about the sensitivity of the mean and the covariance. However, the allocation can possibly be designed better by formulating it as an optimization problem that aims at maximizing the utility of the synthetic data. Then, the optimal solution can be obtained using grid search, random search, or Bayesian optimization \cite{snoek2012practical}. We leave this as a possible future direction.

\vspace{-1em}
\section{Conclusion}
\vspace{-0.5em}
In this work, we combine two previously non-intersecting techniques
\textendash{} random orthonormal projection and Gaussian generative
model \textendash{} to provide a solution to non-interactive private data release. 
We propose the RON-Gauss model that exploits the Diaconis-Freedman-Meckes effect, and
present three algorithms for both unsupervised and supervised learning. We prove that our RON-Gauss model preserves $\epsilon$-differential
privacy. Finally, our experiments on three real-world datasets under
clustering, classification, and regression applications show the strength
of the method. RON-Gauss provides significant performance
improvement over previous approaches, and yields
the utility performance close to the non-private baseline, while preserving 
differential privacy with $\epsilon=1$.

\section*{Acknowledgement}
\chang{
The authors would like to thank S{\'e}bastien Gambs for shepherding the paper, the anonymous reviewers for their valuable feedback, and Mert Al, Daniel Cullina, and Alex Dytso for  insightful discussions. This work is supported in part by the National Science Foundation (NSF) under the grant CNS-1553437 and CCF-1617286, an Army Research Office YIP Award, and faculty research awards from Google, Cisco, Intel, and IBM.}

{\bibliographystyle{plain}
\bibliography{dimdp_references}

\appendix

\section{Proof of Lemma \ref{lem:proj_range}}
\label{sec:proof_lem_proj_range}

\begin{proof}
The proof uses the property of orthogonal projection in a vector space.
First, notice that $\left\Vert \mathbf{W}^{T}\mathbf{x}\right\Vert _{F}=\left\Vert \mathbf{W}\mathbf{W}^{T}\mathbf{x}\right\Vert _{F}$,
which can be verified as follows.
\begin{align*}
\left\Vert \mathbf{W}^{T}\mathbf{x}\right\Vert _{F} & =\sqrt{\mathrm{tr}(\mathbf{x}^{T}\mathbf{W}\mathbf{W}^{T}\mathbf{x})}\\
 & =\sqrt{\mathrm{tr}(\mathbf{x}^{T}\mathbf{W}\mathbf{W}^{T}\mathbf{W}\mathbf{W}^{T}\mathbf{x})}\\
 & =\left\Vert \mathbf{W}\mathbf{W}^{T}\mathbf{x}\right\Vert _{F},
\end{align*}
where the second equality is from the fact that $\mathbf{W}^{T}\mathbf{W}=\mathbf{I}$.
Then, notice that $\mathbf{P}=\mathbf{W}\mathbf{W}^{T}$ is a projection
matrix with $p$ orthonormal basis as the columns of $\mathbf{W}$
(cf. \cite[Chapter 5]{RefWorks:290}). Therefore, the idempotent property
of $\mathbf{P}$ can be used as,
\begin{align*}
\left\Vert \mathbf{W}^{T}\mathbf{x}\right\Vert _{F} & =\left\Vert \mathbf{W}\mathbf{W}^{T}\mathbf{x}\right\Vert _{F}=\left\Vert \mathbf{P}\mathbf{x}\right\Vert _{F}\\
 & =\left\langle \mathbf{P}\mathbf{x},\mathbf{P}\mathbf{x}\right\rangle _{F}=\left\langle \mathbf{P}\mathbf{x},\mathbf{x}\right\rangle _{F}.
\end{align*}
The last equality can be verified as follows. Let $\mathbf{P}\mathbf{x}\in\mathcal{P}$,
and let $\mathbf{x}^{\bot}=\mathbf{x}-\mathbf{P}\mathbf{x}\in\mathcal{P}^{\bot}$,
then $\left\langle \mathbf{P}\mathbf{x},\mathbf{x}\right\rangle _{F}=\left\langle \mathbf{P}\mathbf{x},\mathbf{P}\mathbf{x}+\mathbf{x}^{\bot}\right\rangle _{F}=\left\langle \mathbf{P}\mathbf{x},\mathbf{P}\mathbf{x}\right\rangle _{F}+\left\langle \mathbf{P}\mathbf{x},\mathbf{x}^{\bot}\right\rangle _{F}=\left\langle \mathbf{P}\mathbf{x},\mathbf{P}\mathbf{x}\right\rangle _{F}$
from the additivity of the inner product and the fact that $\left\langle \mathbf{P}\mathbf{x},\mathbf{x}^{\bot}\right\rangle _{F}=0$
by construction. Then, using the Cauchy-Schwarz inequality, $\left\Vert \mathbf{P}\mathbf{x}\right\Vert _{F}^{2}=\left\langle \mathbf{P}\mathbf{x},\mathbf{x}\right\rangle _{F}^{2}\leq\left\Vert \mathbf{P}\mathbf{x}\right\Vert _{F}\left\Vert \mathbf{x}\right\Vert _{F}$,
and, hence, $\left\Vert \mathbf{P}\mathbf{x}\right\Vert _{F}=\left\Vert \mathbf{W}\mathbf{W}^{T}\mathbf{x}\right\Vert _{F}=\left\Vert \mathbf{W}^{T}\mathbf{x}\right\Vert _{F}\leq\left\Vert \mathbf{x}\right\Vert _{F}$.
\end{proof}

\section{\texorpdfstring{$L_{1}$}{L1}-Sensitivity of the MLE of the Covariance}

\label{sec:sens_mle_cov}

Consider the maximum likelihood estimate (MLE) \cite{RefWorks:532}
for the covariance matrix, which is an unbiased estimate (cf. \cite{RefWorks:504}):
\[
\boldsymbol{\Sigma}_{MLE}=\frac{1}{n}\sum_{i=1}^{n}(\widetilde{\mathbf{x}_{i}}-\boldsymbol{\mu})(\widetilde{\mathbf{x}_{i}}-\boldsymbol{\mu})^{T},
\]
where $\boldsymbol{\mu}$ is the sample mean specific to the instance
of the dataset. Hence, the neighboring datasets may have different
means. Then, the sensitivity can be derived as follows.
\begin{lem}
The $L_{1}$-sensitivity of the MLE of the covariance matrix is $(2\sqrt{p}+2n\sqrt{p})/n$.
\end{lem}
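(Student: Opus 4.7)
The plan is to parallel the proof of Lemma~\ref{lem:sens_cov}, but to pay explicit attention to the key difference: here the sample mean $\boldsymbol{\mu}$ used in the MLE depends on the dataset, so for neighboring $\mathbf{X},\mathbf{X}'$ we have $\boldsymbol{\mu}\neq\boldsymbol{\mu}'$. The cleanest approach I would take is to expand the quadratic and rewrite the MLE in its ``raw-moment minus outer-product-of-mean'' form,
\[
\boldsymbol{\Sigma}_{MLE}=\tfrac{1}{n}\sum_{i=1}^{n}\widetilde{\mathbf{x}_i}\widetilde{\mathbf{x}_i}^{T}-\boldsymbol{\mu}\boldsymbol{\mu}^{T},
\]
and then bound the sensitivity of the two summands separately via the triangle inequality.

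For the first summand, the same calculation as in Lemma~\ref{lem:sens_cov} applies verbatim: since neighboring projected datasets differ in exactly one sample and every $\|\widetilde{\mathbf{x}_i}\|_F\le 1$ by Lemma~\ref{lem:proj_range}, the sensitivity of $\tfrac{1}{n}\sum\widetilde{\mathbf{x}_i}\widetilde{\mathbf{x}_i}^{T}$ contributes $2\sqrt{p}/n$. For the second summand I would not attempt to cancel anything between $\boldsymbol{\mu}$ and $\boldsymbol{\mu}'$; instead, I would simply bound $\|\boldsymbol{\mu}\boldsymbol{\mu}^T-\boldsymbol{\mu}'\boldsymbol{\mu}'^T\|_1$ via the triangle inequality as $\|\boldsymbol{\mu}\boldsymbol{\mu}^T\|_1+\|\boldsymbol{\mu}'\boldsymbol{\mu}'^T\|_1$, and then use the norm relation $\|\cdot\|_1\le\sqrt{p}\|\cdot\|_F$ together with $\|\boldsymbol{\mu}\|_F\le\tfrac{1}{n}\sum\|\widetilde{\mathbf{x}_i}\|_F\le 1$ (again by Lemma~\ref{lem:proj_range}) to conclude $\|\boldsymbol{\mu}\boldsymbol{\mu}^T\|_1\le\sqrt{p}\|\boldsymbol{\mu}\|_F^2\le\sqrt{p}$. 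This yields $\|\boldsymbol{\mu}\boldsymbol{\mu}^T-\boldsymbol{\mu}'\boldsymbol{\mu}'^T\|_1\le 2\sqrt{p}$, which, added to the previous piece, gives the stated $(2\sqrt{p}+2n\sqrt{p})/n$.

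The main obstacle (and really the whole point of the lemma) is that, unlike the biased estimator of Eq.~\eqref{eq:cov_est} where the centering uses the fixed $\boldsymbol{\mu}^{DP}$ shared across neighbors, here $\boldsymbol{\mu}$ is itself sensitive to the changed record, so \emph{every} summand in $\sum(\widetilde{\mathbf{x}_i}-\boldsymbol{\mu})(\widetilde{\mathbf{x}_i}-\boldsymbol{\mu})^T$ can change simultaneously. This is precisely what inflates the sensitivity by a factor of $n$ relative to Lemma~\ref{lem:sens_cov}. One could try to obtain a sharper bound by exploiting $\|\boldsymbol{\mu}-\boldsymbol{\mu}'\|_F\le 2/n$ in place of the crude triangle inequality on $\boldsymbol{\mu}\boldsymbol{\mu}^T-\boldsymbol{\mu}'\boldsymbol{\mu}'^T$, but since the stated bound is already of order $\sqrt{p}$ (dominant in $n$) and the authors' purpose is to contrast the MLE unfavorably with their biased estimator, the looser triangle-inequality bound is sufficient and, I expect, is what the statement captures.
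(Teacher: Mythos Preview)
Your proposal is correct and follows essentially the same route as the paper: both expand $\boldsymbol{\Sigma}_{MLE}$ into the raw second-moment term $\tfrac{1}{n}\sum\widetilde{\mathbf{x}_i}\widetilde{\mathbf{x}_i}^T$ minus $\boldsymbol{\mu}\boldsymbol{\mu}^T$, bound the first term's sensitivity by $2\sqrt{p}/n$ via the single-sample-difference argument, and bound the second term crudely by the triangle inequality $\|\boldsymbol{\mu}\boldsymbol{\mu}^T\|_1+\|\boldsymbol{\mu}'\boldsymbol{\mu}'^T\|_1\le 2\sqrt{p}$ using $\|\boldsymbol{\mu}\|_F\le 1$. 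Your closing remark that a sharper bound via $\|\boldsymbol{\mu}-\boldsymbol{\mu}'\|_F\le 2/n$ is available but unnecessary for the lemma's purpose is also apt.
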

\begin{proof}
For neighboring datasets $\mathbf{X},\mathbf{X}'$,
\begin{align*}
s(f)= & \sup\frac{1}{n}\left\Vert \sum_{i=1}^{n}(\widetilde{\mathbf{x}_{i}}-\boldsymbol{\mu})(\widetilde{\mathbf{x}_{i}}-\boldsymbol{\mu})^{T}\right.\\
 & \left.-\sum_{i=1}^{n}(\widetilde{\mathbf{x}'_{i}}-\boldsymbol{\mu}')(\widetilde{\mathbf{x}'_{i}}-\boldsymbol{\mu}')^{T}\right\Vert _{1}\\
 = & \sup\frac{1}{n}\left\Vert (\sum_{i=1}^{n}\widetilde{\mathbf{x}_{i}}\widetilde{\mathbf{x}_{i}}^{T}-n\boldsymbol{\mu}\boldsymbol{\mu}^{T})\right.\\
 & \left.-(\sum_{i=1}^{n}\widetilde{\mathbf{x}'_{i}}\widetilde{\mathbf{x}'_{i}}^{T}-n\boldsymbol{\mu}'\boldsymbol{\mu}'^{T})\right\Vert _{1}\\
= & \sup\frac{1}{n}\left\Vert (\widetilde{\mathbf{x}_{i}}\widetilde{\mathbf{x}_{i}}^{T}-\widetilde{\mathbf{x}_{i}'}\widetilde{\mathbf{x}_{i}'}^{T})+n(\boldsymbol{\mu}'\boldsymbol{\mu}'^{T}-\boldsymbol{\mu}\boldsymbol{\mu}^{T})\right\Vert _{1}\\
\leq & \sup\frac{1}{n}(\left\Vert \widetilde{\mathbf{x}_{i}}\widetilde{\mathbf{x}_{i}}^{T}\right\Vert _{1}+\left\Vert \widetilde{\mathbf{x}_{i}'}\widetilde{\mathbf{x}_{i}'}^{T}\right\Vert _{1})\\
 & +\left\Vert \boldsymbol{\mu}'\boldsymbol{\mu}'^{T}\right\Vert _{1}+\left\Vert \boldsymbol{\mu}\boldsymbol{\mu}^{T}\right\Vert _{1}\\
\leq & \sup\frac{2\sqrt{p}}{n}\left\Vert \widetilde{\mathbf{x}_{i}}\widetilde{\mathbf{x}_{i}}^{T}\right\Vert _{F}+2\sqrt{p}\left\Vert \boldsymbol{\mu}'\boldsymbol{\mu}'^{T}\right\Vert _{F}\\
 \leq & \frac{2\sqrt{p}}{n}+2\sqrt{p}.
\end{align*}
The last inequality is due to the following observation: $\left\Vert \boldsymbol{\mu}\right\Vert _{F}=\frac{1}{n}\left\Vert \sum\widetilde{\mathbf{x}_{i}}\right\Vert _{F}\leq\frac{1}{n}\sum\left\Vert \widetilde{\mathbf{x}_{i}}\right\Vert _{F}=1$.
\end{proof}

\end{document}